\title{TSO Games \\ \Large{On the decidability of safety games under the total store order semantics}}
\author{
    Stephan Spengler
    \institute{Uppsala University \\ Uppsala, Sweden}
    \email{stephan.spengler@it.uu.se}
    \and
    Sanchari Sil
    \institute{Chennai Mathematical Institute \\ Chennai, India}
    \email{sanchari@cmi.ac.in}
}
\DeclareMathAlphabet{\mathcal}{OMS}{cmsy}{m}{n}
\newcolumntype{C}[1]{>{\centering\arraybackslash}p{#1}}
\newtheorem{theorem}{Theorem}
\newaliascnt{lemma}{theorem}
\newtheorem{lemma}[lemma]{Lemma}
\newaliascnt{corollary}{theorem}
\newaliascnt{claim}{theorem}
\newtheorem{claim}[claim]{Claim}
\theoremstyle{definition} % boldface title, regular body
\newaliascnt{remark}{theorem}
\newtheorem{remark}[remark]{Remark}
\newcommand{\inference}[3]{\textbf{#1} & \frac{#2}{#3}}
\newcommand{\Nat}{\mathbb{N}}
\newcommand{\bigO}{\mathcal{O}}
\newcommand{\newsemantics}[2]{\newcommand{#1}{\mathsf{#2}}}
\newcommand{\newfunction}[2]{\newcommand{#1}{\mathop\mathrm{#2}}} % match build-in functions like \min
\newcommand{\newcomponent}[2]{\newcommand{#1}{\mathsf{#2}}}
\newcommand{\newinstruction}[2]{\newcommand{#1}{\mathtt{#2}}}
\newcommand{\renewinstruction}[2]{\renewcommand{#1}{\mathtt{#2}}}
\newcommand{\sizeof}[1]{|#1|}
\newcommand{\of}[1]{(#1)}
\newcommand{\tuple}[1]{\langle#1\rangle}
\newcommand{\set}[1]{\{#1\}}
\newcommand{\Set}[1]{\left\{#1\right\}}
\newcommand{\cof}[1][]{\ifthenelse{\isempty{#1}}{}{\of{#1}}}
\renewcommand{\to}[1][]{\mathop{\xrightarrow{~#1~}}}
\renewcommand{\part}{\rightharpoonup}
\newcomponent{\word}{w}
\newcommand{\newclass}[2]{\newcommand{#1}{\textsc{#2}}}
\newclass{\exptime}{ExpTime}
\newclass{\etime}{ETime}
\newclass{\nexptime}{NexpTime}
\newclass{\class}{Class}
\newclass{\pspace}{PSpace}
\newclass{\expspace}{ExpSpace}
\newclass{\dtime}{DTime}
\newclass{\dspace}{DSpace}
\newcommand{\TS}{\mathcal{T}}
\newcommand{\kstar}{^{*}}
\newcomponent{\conf}{c}
\newcomponent{\confset}{C}
\newcomponent{\lbl}{label}
\newcomponent{\lblset}{L}
\newcomponent{\state}{q}
\newcomponent{\stateset}{Q}
\newfunction{\post}{Post}
\newfunction{\pre}{Pre}
\newsemantics{\final}{final}
\newsemantics{\target}{target}
\newsemantics{\all}{all}
\newsemantics{\true}{true}
\newsemantics{\false}{false}
\newcommand{\program}{\mathcal{P}}
\newcomponent{\process}{Proc}
\newcomponent{\tso}{TSO}
\newcomponent{\dtso}{DTSO}
\newcomponent{\transition}{\delta}
\newcomponent{\xvar}{x}
\newcomponent{\varset}{Vars}
\newcomponent{\dval}{d}
\newcomponent{\valset}{Dom}
\newcomponent{\msg}{m}
\newcomponent{\instr}{instr}
\newcomponent{\instrs}{Instrs}
\newcommand{\xd}{\xvar, \dval}
\newcomponent{\self}{self}
\newcomponent{\other}{other}
\newinstruction{\rd}{rd}
\renewinstruction{\wr}{wr}
\newinstruction{\nop}{skip}
\newinstruction{\mf}{mf}
\newinstruction{\up}{up} % TSO semantics
\newinstruction{\prop}{prop} % DTSO semantics
\newinstruction{\del}{del} % DTSO semantics
\newcommand{\indexset}{\mathcal{I}}
\newcommand{\statemap}{\mathcal{S}}
\newcommand{\buffermap}{\mathcal{B}}
\newcommand{\memorymap}{\mathcal{M}}
\newcommand{\pid}{\iota}
\newcomponent{\view}{v}
\newcomponent{\viewset}{V}
\newcommand{\valuemap}{\mathcal{V}}
\newcommand{\fencemap}{\mathcal{F}}
\newcommand{\game}{\mathcal{G}}
\newcomponent{\play}{P}
\newcommand{\channelsystem}{\mathcal{L}}
\newcomponent{\channelstate}{s}
\newcomponent{\channelstateset}{S}
\newcomponent{\channelset}{L}
\newcomponent{\channelmessage}{m}
\newcomponent{\channelmessageset}{M}
\newcomponent{\channeloperation}{op}
\newcomponent{\channeloperationset}{Op}
\newcomponent{\letter}{\sigma}
\newcomponent{\alphabet}{\Sigma}
\newcomponent{\atmL}{L}
\newcomponent{\atmR}{R}
\newcomponent{\atmD}{D}
\newcomponent{\pos}{i}
\newcomponent{\posj}{j}
\newcommand{\xrd}{\xvar_\rd}
\newcommand{\xwr}{\xvar_\wr}
\newcomponent{\yvar}{y}
\newcomponent{\zvar}{z}
\newcomponent{\hstate}{h}
\newcomponent{\rstate}{r}
\begin{document}

\maketitle

\begin{abstract}
We consider an extension of the classical Total Store Order (TSO) semantics by expanding it to turn-based 2-player safety games.
During her turn, a player can select any of the communicating processes and perform its next transition.
We consider different formulations of the safety game problem depending on whether one player or both of them transfer messages from the process buffers to the shared memory.
We give the complete decidability picture for all the possible alternatives.
\end{abstract}

% CHAPTERS
\section{Introduction}
Most modern architectures, such as Intel x86 \cite{x86-swdmanual-1-3}, SPARC \cite{sparc9}, IBM's POWER \cite{power-isa-v31b}, and ARM \cite{arm-v7ar-refman}, implement several relaxations and optimisations that reduce the latency of memory accesses. This has the effect of breaking the Sequential Consistency (SC) assumption \cite{DBLP:journals/tc/Lamport79}. SC is the classical strong semantics for concurrent programs that interleaves the parallel executions of processes while maintaining the order in which instructions were issued. Programmers usually assume that the execution of programs follows the SC model. However, this is not true when we consider concurrent programs running on modern architectures. In fact, even simple programs such as mutual exclusion and producer-consumer protocols, that are correct under SC, may exhibit erroneous behaviors. This is mainly due to the relaxation of the execution order of the instructions. For instance, a standard relaxation is to allow the reordering of reads and writes of the same process if the reads have been issued after the writes and they concern different memory locations. This relaxation can be implemented using an unbounded perfect FIFO queue/buffer between each process and the memory. These buffers are used to store delayed writes. The corresponding model is called Total Store Ordering (TSO) and corresponds to the formalisation of SPARC and Intel x86 ~\cite{DBLP:conf/tphol/OwensSS09,DBLP:journals/cacm/SewellSONM10}.

In TSO, an unbounded buffer is associated with each process. When a process executes a write operation, this write is appended to the end of the buffer of that process. A pending write operation on the variable $x$ at the head of a buffer can be deleted in a non-deterministic manner. This updates the value of the shared variable $x$ in the memory. To perform a read operation on a variable $x$, the process first checks its buffer for a pending write operation on the variable $x$. If such a write exists, then the process reads the value written by the newest pending write operation on $x$. Otherwise, the process fetches the value of the variable $x$ from the memory. The verification of programs running under TSO is challenging due to the unboundedness of the buffers. In fact, the induced state space of a program under TSO maybe infinite even if the program itself is a finite-state system.

The reachability problem for programs under TSO checks whether a given program state is reachable during program execution. It is also called safety problem, in case the target state is considered to be a bad state. It has been shown decidable using different alternative semantics for TSO (e.g., \cite{DBLP:conf/popl/AtigBBM10,DBLP:conf/tacas/AbdullaACLR12,DBLP:journals/lmcs/AbdullaABN18}). Furthermore, it has been shown in \cite{DBLP:conf/popl/AtigBBM10} that lossy channel systems (see e.g., \cite{wsts2,wsts1,DBLP:conf/icalp/AbdullaJ94,DBLP:journals/ipl/Schnoebelen02}) can be simulated by programs running under TSO. This entails that the reachability problem for programs under TSO is non-primitive recursive and that the repeated reachability problem is undecidable. This is an immediate consequence of the fact that the reachability problem for lossy channel is non-primitive recursive \cite{DBLP:journals/ipl/Schnoebelen02} and that the repeated reachability problem is undecidable \cite{DBLP:conf/icalp/AbdullaJ94}. The termination problem for programs running under TSO has been shown to be decidable in \cite{DBLP:journals/siglog/Atig20} using the framework of well-structured transition systems \cite{wsts1,wsts2}.

The authors of \cite{DBLP:conf/esop/BouajjaniDM13,DBLP:conf/icalp/BouajjaniMM11} consider the robustness problem for programs running under TSO. This problem consists in checking whether, for any given TSO execution, there is an equivalent SC execution of the same program. Two executions are declared equivalent by the robustness criterion if they agree on (1) the order in which instructions are executed within the same process (i.e., program order), (2) the write instruction from which each read instruction fetches its value (i.e., read-from relation), and (3) the order in which write instruction on the same variable are committed to memory (i.e., store ordering). The problem of checking whether a program is robust has been shown to be \pspace-complete in \cite{DBLP:conf/esop/BouajjaniDM13}. A variant of the robustness problem which is called persistence, declares that two runs are equivalent if (1) they have the same program order and (2) all write instructions reach the memory in the same order. Checking the persistency of a program under TSO has been shown to be \pspace-complete in \cite{DBLP:conf/esop/AbdullaAP15}. Observe that the persistency and robustness problems are stronger than the safety problem (i.e., if a program is safe under SC and robust/persistent, then it is also safe under TSO).

Due to the non-determinism of the buffer updates, the buffers associated with each process under TSO appear to exhibit a lossy behaviour. Previously, games on lossy channel systems (and more general on monotonic systems) were studied in \cite{DBLP:journals/logcom/AbdullaBd08}. Unfortunately these results are not applicable / transferable to programs under TSO whose induced transition systems are not monotone \cite{DBLP:conf/popl/AtigBBM10}.

In this paper, we consider a natural continuation of the works on both the study of the decidability/complexity of the formal verification of programs under TSO and the study of games on concurrent systems. This is further motivated by the fact that formal games provide a framework to reason about a system's behaviour, which can be leveraged in control model checking, for example in controller synthesis problems.

In more detail, we consider (safety) games played on the transition systems induced by programs running under TSO. Given a program under TSO, we construct a game in which two players A and B take turns in executing instructions of the program. The goal of player B is to reach a given set of final configurations, while player A tries to avoid this. Thus, it can also be seen as a reachability game with respect to player B. In this game, the turn determines which player will execute the next program instruction. However, this definition leaves the control of updates undefined. To address this, we give the player the possibility to update memory by removing the pending writes from the buffer between the execution of two instructions.

The control over the buffer updates is shared between the two players in varying ways. We differentiate between multiple scenarios based on when exactly each player is allowed to update. In particular, for each player A or B we have the following cases: (1) she can never update, (2) she can update after her own turn, (3) she can update before her own turn, and (4) she can always update, i.e. before and after her own turn. In total, we obtain an exhaustive collection of 16 different TSO games. We divide these 16 games into four different groups, depending on their decidability results.
\begin{itemize}
    \item Group I (7 games) can be reduced to TSO games with 2-bounded buffers.
    \item Group II (1 game) can be reduced to TSO games with bounded buffers.
    \item Group III (7 games) can simulate perfect channel systems.
    \item Group IV (1 game) can be reduced to a finite game without buffers.
\end{itemize}
This classification is shown in \autoref{fig:tso-groups}. Of these four groups, only Group III is undecidable, the others each reduce to a finite game and are thus decidable.

\begin{figure}
\centering
\def\wdth{0.12\textwidth}
\begin{tabular}{p\wdth C\wdth C\wdth C\wdth C\wdth C\wdth}
    \toprule
    & & \multicolumn{4}{c}{Player A:} \\
    & & always & before & after & never \\
    \midrule
    \multirow{4}{0.15\textwidth}{Player B:}
    & always & I (d) & & \multicolumn{1}{c|}{} & \\
    \cline{4-4}
    & before & & \multicolumn{1}{|c|}{II (d)} & \multicolumn{1}{c|}{} & \\
    \cline{4-5}
    & after & \multicolumn{2}{c|}{} & III (u) & \\
    \cline{3-4}\cline{6-6}
    & never & \multicolumn{3}{c|}{} & IV (d) \\
    \bottomrule
\end{tabular}
\caption{Groups of TSO games, where players A and B are allowed to update the buffer: always, before their own move, after their own move, or never. The games in group I, II and IV are decidable (d), the games in group III are undecidable (u).}
\label{fig:tso-groups}
\end{figure}

Finally, we establish the exact computational complexity for the decidable games. In fact, we show that the problem is \exptime-complete. We prove \exptime-hardness by a reduction from the problem of acceptance of a word by a linearly bounded alternating Turing machine \cite{DBLP:journals/jacm/ChandraKS81}. To prove \exptime-membership, we show that it is possible to compute the set of winning region for player B in exponential time. These results are surprising given the non-primitive recursive complexity of the reachability problem for programs under TSO and the undecidability of the repeated reachability problem.

\smallskip

\noindent
{\bf Related Works.}
In addition to the related work mentioned in the introduction on the decidability / complexity of the verification problems of programs running under TSO, there have been some works on parameterized verification of programs running under TSO. The problem consists in verifying a concurrent program regardless of the number of involved processes (which are identical finite-state systems). The parameterised reachability problem of programs running under TSO has been shown to be decidable in \cite{DBLP:conf/concur/AbdullaABN16,DBLP:journals/lmcs/AbdullaABN18}. While this problem for concurrent programs performing only read and writing operations (no atomic read-write instructions) is \pspace-complete \cite{DBLP:journals/pacmpl/AbdullaAR20}. This result has been recently extended to processes manipulating abstract data types over infinite domains \cite{DBLP:conf/tacas/AbdullaAFGHKS23}. Checking the robustness of a parameterised concurrent system is decidable and \expspace-hard \cite{DBLP:conf/esop/BouajjaniDM13}.

As far as we know this is the first work that considers the game problem for programs running under TSO. The proofs and techniques used in this paper are different from the ones used to prove decidability / complexity results for the verification of programs under TSO except the undecidability result which uses some ideas from the reduction from the reachability problem for lossy channel systems to its corresponding problem for programs under TSO \cite{DBLP:conf/popl/AtigBBM10}. However, our undecidability proof requires us to implement a protocol that detects lossiness of messages in order to turn the lossy channel system into a perfect one (which is the most intricate part of the proof).

\section{Preliminaries}

\subsection{Transition Systems}
A \emph{(labeled) transition system} is a triple $\tuple{ \confset, \lblset, \to }$, where $\confset$ is a set of \emph{configurations}, $\lblset$ is a set of \emph{labels}, and $\to \subseteq \confset \times \lblset \times \confset$ is a \emph{transition relation}.
We usually write $\conf_1 \to[\lbl] \conf_2$ if $\tuple{ \conf_1, \lbl, \conf_2} \in \to$.
Furthermore, we write $\conf_1 \to \conf_2$ if there exists some $\lbl$ such that $\conf_1 \to[\lbl] \conf_2$.
A \emph{run} $\pi$ of $\TS$ is a sequence of transitions $\conf_0 \to[\lbl_1] \conf_1 \to[\lbl_2] \conf_2 \dots \to[\lbl_n] \conf_n$.
It is also written as $\conf_0 \to[\pi] \conf_n$.
A configuration $\conf'$ is \emph{reachable} from a configuration $\conf$, if there exists a run from $\conf$ to $\conf'$.

For a configuration $\conf$, we defined $\pre\of\conf = \set{ \conf' \mid \conf' \to \conf }$ and $\post\of\conf = \set{ \conf' \mid \conf \to \conf' }$.
We extend these notions to sets of configurations $\confset'$ with $\pre(\confset') = \bigcup_{\conf \in \confset'} \pre\of\conf$ and $\post(\confset') = \bigcup_{\conf \in \confset'} \post\of\conf$.

An \emph{unlabeled transition system} is a transition system without labels.
Formally, it is defined as a TS with a singleton label set.
In this case, we omit the labels.

\subsection{Perfect Channel Systems}
Given a set of messages $\channelmessageset$, define the set of channel operations $\channeloperationset = \set{ !\channelmessage, ?\channelmessage \mid \channelmessage \in \channelmessageset} \cup \set\nop$.
A \emph{perfect channel system} (PCS) is a triple $\channelsystem = \tuple{ \channelstateset, \channelmessageset, \transition }$, where $\channelstateset$ is a set of states, $\channelmessageset$ is a set of messages, and $\transition \subseteq \channelstateset \times \channeloperationset \times \channelstateset$ is a transition relation.
We write $\channelstate_1 \to[\channeloperation] \channelstate_2$ if $\tuple{ \channelstate_1, \channeloperation, \channelstate_2 } \in \transition$.

Intuitively, a PCS models a finite state automaton that is augmented by a \emph{perfect} (i.e. non-lossy) FIFO buffer, called \emph{channel}.
During a \emph{send operation} $!\channelmessage$, the channel system appends $\channelmessage$ to the tail of the channel.
A transition $?\channelmessage$ is called \emph{receive operation}.
It is only enabled if the channel is not empty and $\channelmessage$ is its oldest message.
When the channel system performs this operation, it removes $\channelmessage$ from the head of the channel.
Lastly, a $\nop$ operation just changes the state, but does not modify the buffer.

The formal semantics of $\channelsystem$ are defined by a transition system $\TS_\channelsystem = \tuple{ \confset_\channelsystem, \lblset_\channelsystem, \to_\channelsystem }$, where $\confset_\channelsystem = \channelstateset \times \channelmessageset\kstar$, $\lblset_\channelsystem = \channeloperationset$ and the transition relation $\to_\channelsystem$ is the smallest relation given by:
\begin{itemize}
	\item If $\channelstate_1 \to[!\channelmessage] \channelstate_2$ and $\word \in \channelmessageset\kstar$, then $\tuple{ \channelstate_1, \word } \to[!\channelmessage]_\channelsystem \tuple{ \channelstate_2, \channelmessage \cdot \word }$.
	\item If $\channelstate_1 \to[?\channelmessage] \channelstate_2$ and $\word \in \channelmessageset\kstar$, then $\tuple{ \channelstate_1, \word \cdot \channelmessage } \to[?\channelmessage]_\channelsystem \tuple{ \channelstate_2, \word }$.
	\item If $\channelstate_1 \to[\nop] \channelstate_2$ and $\word \in \channelmessageset\kstar$, then $\tuple{ \channelstate_1, \word } \to[\nop]_\channelsystem \tuple{ \channelstate_2, \word }$.
\end{itemize}
A state $\channelstate_F \in \channelstateset$ is \emph{reachable} from a configuration $\conf_0 \in \confset_\channelsystem$, if there exists a configuration $\conf_F = \tuple{ \channelstate_F, \word_F }$ such that $\conf_F$ is reachable from $\conf_0$ in $\TS_\channelsystem$.
The \textbf{state reachability problem} of PCS is, given a perfect channel system $\channelsystem$, an initial configuration $\conf_0 \in \confset_\channelsystem$ and a final state $\channelstate_F \in \channelstateset$, to decide whether $\channelstate_F$ is reachable from $\conf_0$ in $\TS_\channelsystem$.
It is undecidable \cite{DBLP:journals/jacm/BrandZ83}.

\section{Concurrent Programs}

\subsection{Syntax}

Let $\valset$ be a finite data domain and $\varset$ be a finite set of shared variables over $\valset$.
We define the \emph{instruction set} $\instrs = \set{ \rd\of\xd, \wr\of\xd \mid \xvar \in \varset, \dval \in \valset } \cup \set{ \nop, \mf }$,
which are called \emph{read}, \emph{write}, \emph{skip} and \emph{memory fence}, respectively.
A process is represented by a finite state labeled transition system.
It is given as the triple $\process = \tuple{ \stateset, \instrs, \transition }$, where $\stateset$ is a finite set of \emph{local states} and $\transition \subseteq \stateset \times \instrs \times \stateset$ is the transition relation.
As with transition systems, we write $\state_1 \to[\instr] \state_2$ if $\tuple{ \state_1, \instr, \state_2} \in \transition$ and $\state_1 \to \state_2$ if there exists some $\instr$ such that $\state_1 \to[\instr] \state_2$.

A \emph{concurrent program} is a tuple of processes $\program = \tuple{ \process^\pid }_{\pid \in \indexset}$, where $\indexset$ is a finite set of process identifiers.
For each $\pid \in \indexset$ we have $\process^\pid = \tuple{ \stateset^\pid, \instrs, \transition^\pid }$.
A \emph{global} state of $\program$ is a function $\statemap: \indexset \to \bigcup_{\pid \in \indexset} \stateset^\pid$ that maps each process to its local state, i.e $\statemap(\pid) \in \stateset^\pid$.

\subsection{TSO Semantics}

Under TSO semantics, the processes of a concurrent program do not interact with the shared memory directly, but indirectly through a FIFO \emph{store buffer} instead.
When performing a \emph{write} instruction $\wr\of\xd$, the process adds a new message $\tuple\xd$ to the tail of its store buffer.
A \emph{read} instruction $\rd\of\xd$ works differently depending on the current buffer content of the process.
If the buffer contains a write message on variable $\xvar$, the value $\dval$ must correspond to the value of the most recent such message.
Otherwise, the value is read directly from memory.
A \emph{skip} instruction only changes the local state of the process.
The \emph{memory fence} instruction is disabled, i.e. it cannot be executed, unless the buffer of the process is empty.
Additionally, at any point during the execution, the process can \emph{update} the write message at the head of its buffer to the memory.
For example, if the oldest message in the buffer is $\tuple\xd$, it will be removed from the buffer and the memory value of variable $\xvar$ will be updated to contain the value $\dval$.
This happens in a non-deterministic manner.

Formally, we introduce a TSO \emph{configuration} as a tuple $\conf = \tuple{ \statemap, \buffermap, \memorymap }$, where:
\begin{itemize}
	\item $\statemap: \indexset \to \bigcup_{\pid \in \indexset} \stateset^\pid$ is a global state of $\program$.
	\item $\buffermap: \indexset \to (\varset \times \valset)\kstar$ represents the buffer state of each process.
	\item $\memorymap: \varset \to \valset$ represents the memory state of each shared variable.
\end{itemize}
Given a configuration $\conf$, we write $\statemap\of\conf$, $\buffermap\of\conf$ and $\memorymap\of\conf$ for the global program state, buffer state and memory state of $\conf$.
The semantics of a concurrent program running under TSO is defined by a transition system $\TS_\program = \tuple{ \confset_\program, \lblset_\program, \to_\program }$,
where $\confset_\program$ is the set of all possible TSO configurations,
$\lblset_\program = \set{ \instr_\pid \mid \instr \in \instrs, \pid \in \indexset } \cup \set{ \up_\iota \mid \pid \in \indexset }$ is the set of labels.
The transition relation $\to_\program$ is given by the rules in \autoref{fig:tso-semantics}, where we use $\buffermap\of\pid|_{\set\xvar \times \valset}$ to denote the restriction of $\buffermap\of\pid$ to write messages on the variable $\xvar$.

\begin{figure}
\centering
\begin{equation*}
\begin{array}{lc}

\inference{read-own-write}
	{\state \to[\rd\of\xd] \state' \qquad \statemap\of\pid = \state \qquad \buffermap\of\pid|_{\set\xvar \times \valset} = \tuple\xd \cdot \word}
	{\tuple{ \statemap, \buffermap, \memorymap} \to[\rd\of\xd_\pid]_\program \tuple{ \statemap[\pid \leftarrow \state'], \buffermap, \memorymap}}
\bigskip\\
\inference{read-from-memory}
	{\state \to[\rd\of\xd] \state' \qquad \statemap\of\pid = \state \qquad \buffermap\of\pid|_{\set\xvar \times \valset} = \varepsilon \qquad \memorymap\of\xvar = \dval}
	{\tuple{ \statemap, \buffermap, \memorymap} \to[\rd\of\xd_\pid]_\program \tuple{ \statemap[\pid \leftarrow \state'], \buffermap, \memorymap}}
\bigskip\\
\inference{write}
	{\state \to[\wr\of\xd] \state' \qquad \statemap\of\pid = \state}
	{\tuple{ \statemap, \buffermap, \memorymap} \to[\wr\of\xd_\pid]_\program \tuple{ \statemap[\pid \leftarrow \state'], \buffermap[\pid \leftarrow \tuple\xd \cdot \buffermap\of\pid], \memorymap}}
\bigskip\\
\inference{skip}
	{\state \to[\nop] \state' \qquad \statemap\of\pid = \state}
	{\tuple{ \statemap, \buffermap, \memorymap} \to[\nop_\pid]_\program \tuple{ \statemap[\pid \leftarrow \state'], \buffermap, \memorymap}}
\bigskip\\
\inference{memory-fence}
	{\state \to[\mf] \state' \qquad \statemap\of\pid = \state \qquad \buffermap\of\pid = \varepsilon}
	{\tuple{ \statemap, \buffermap, \memorymap} \to[\mf_\pid]_\program \tuple{ \statemap[\pid \leftarrow \state'], \buffermap, \memorymap}}
\bigskip\\
\inference{update}
	{\buffermap\of\pid = \word \cdot \tuple\xd}
	{\tuple{ \statemap, \buffermap, \memorymap} \to[\up_\pid]_\program \tuple{ \statemap, \buffermap[\pid \leftarrow \word], \memorymap[\xvar \leftarrow \dval]}}

\end{array}
\end{equation*}
\caption{TSO semantics}
\label{fig:tso-semantics}
\end{figure}

A global state $\statemap_F$ is \emph{reachable} from an initial configuration $\conf_0$, if there is a configuration $\conf_F$ with $\statemap(\conf_F) = \statemap_F$ such that $\conf_F$ is reachable from $\conf_0$ in $\TS_\program$.
The \textbf{state reachability problem} of TSO is, given a program $\program$, an initial configuration $\conf_0$ and a final global state $\statemap_F$, to decide whether $\statemap_F$ is reachable from $\conf_0$ in $\TS_\program$.

We define $\up\kstar$ to be the transitive closure of $\set{ \up_\iota \mid \pid \in \indexset }$, i.e. $\conf_1 \to[\up\kstar]_\program \conf_2$ if and only if $\conf_2$ can be obtained from $\conf_1$ by some amount of buffer updates.

\section{Games}

\subsection{Definitions}

A \emph{(safety) game} is an unlabeled transition sytem, in which two players A and B take turns making a \emph{move} in the transition system, i.e. changing the state of the game from one configuration to an adjacent one.
The goal of player B is to reach a given set of final configurations, while player A tries to avoid this.
Thus, it can also be seen as a \emph{reachability} game with respect to player B.

Formally, a game is defined as a tuple $\game = \tuple{ \confset, \confset_A, \confset_B, \to, \confset_F}$, where $\confset$ is the set of configurations, $\confset_A$ and $\confset_B$ form a partition of $\confset$, the transition relation is restricted to $\to \subseteq (\confset_A \times \confset_B) \cup (\confset_B \times \confset_A)$, and $\confset_F \subseteq \confset_A$ is a set of \emph{final states}.
Furthermore, we assume without loss of generality that $\game$ is deadlock-free, i.e. $\post(\conf) \neq \emptyset$ for all $\conf \in \confset$.

A \emph{play} $\play$ of $\game$ is an infinite sequence $\conf_0, \conf_1, \dots$ such that $\conf_i \to \conf_{i+1}$ for all $i \in \Nat$.
In the context of safety games, $\play$ is \emph{winning} for player B if there is $i \in \Nat$ such that $\conf_i \in \confset_F$.
Otherwise, it is \emph{winning} for player A.
This means that player B tries to force the play into $\confset_F$, while player A tries to avoid this.

A \emph{strategy} of player A is a partial function $\sigma_A: \confset\kstar \part \confset_B$, such that $\sigma_A(\conf_0, \dots, \conf_n)$ is defined if and only if $\conf_0, \dots, \conf_n$ is a prefix of a play, $\conf_n \in \confset_A$ and $\sigma_A(\conf_0, \dots, \conf_n) \in \post(\conf_n)$.
A strategy $\sigma_A$ is called \emph{positional}, if it only depends on $\conf_n$, i.e. if $\sigma_A(\conf_0, \dots, \conf_n) = \sigma_A(\conf_n)$ for all $(\conf_0, \dots, \conf_n)$ on which $\sigma_A$ is defined.
Thus, a positional strategy is usually given as a total function $\sigma_A: \confset_A \to \confset_B$.
Given two games $\game$ and $\game'$ and a strategy $\sigma_A$ for $\game$, an \emph{extension} of $\sigma_A$ to $\game'$ is a strategy $\sigma_A'$ of $\game'$ that is also an extension of $\sigma_A$ to the configuration set of $\game'$ in the mathematical sense, i.e. $\sigma_A(\conf_0, \dots, \conf_n) = \sigma_A'(\conf_0, \dots, \conf_n)$ for all $(\conf_0, \dots, \conf_n)$ on which $\sigma_A$ is defined.
Conversely, $\sigma_A$ is called the \emph{restriction} of $\sigma_A'$ to $\game$.
For player B, strategies are defined accordingly.

Two strategies $\sigma_A$ and $\sigma_B$ together with an initial configuration $\conf_0$ induce a play $\play(\conf_0, \sigma_A, \sigma_B) = \conf_0, \conf_1, \dots$ such that $\conf_{i+1} = \sigma_A(\conf_0, \dots, \conf_i)$ for all $\conf_i \in \confset_A$ and $\conf_{i+1} = \sigma_B(\conf_0, \dots, \conf_i)$ for all $\conf_i \in \confset_B$.
A strategy $\sigma_A$ is \emph{winning} from a configuration $\conf$, if for \emph{all} strategies $\sigma_B$ it holds that $\play(\sigma_A, \sigma_B, \conf)$ is a winning play for player A.
A configuration $\conf$ is \emph{winning} for player A if she has a strategy that is winning from $\conf$.
Equivalent notions exist for player B.
The \textbf{safety problem} for a game $\game$ and a configuration $\conf$ is to decide whether $\conf$ is winning for player A.

\begin{lemma}[Proposition 2.21 in \cite{DBLP:conf/dagstuhl/Mazala01}]
\label{lem:positional}
    In safety games, every configuration is winning for exactly one player.
    A player with a winning strategy also has a positional winning strategy.
\end{lemma}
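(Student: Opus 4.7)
The plan is to carry out the standard attractor construction on $\confset_F$, which simultaneously yields determinacy and positional winning strategies for both players. Define by (transfinite) induction a monotone chain of sets: $W_0 = \confset_F$, $W_{\alpha+1} = W_\alpha \cup \set{\conf \in \confset_A : \post\of\conf \subseteq W_\alpha} \cup \set{\conf \in \confset_B : \post\of\conf \cap W_\alpha \neq \emptyset}$, and $W_\lambda = \bigcup_{\alpha < \lambda} W_\alpha$ at limit stages. By monotonicity this chain stabilises at some ordinal; call the limit $W$. I claim $W$ is exactly the winning region of player B and its complement $\overline{W} = \confset \setminus W$ that of player A.

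For player B, I would assign to each $\conf \in W$ a rank, the least ordinal $\alpha$ with $\conf \in W_\alpha$; note that this rank is always $0$ or a successor, since at limit stages no new configurations enter. Define a positional strategy $\sigma_B$ by choosing, for each $\conf \in W \cap \confset_B$ of rank $\alpha+1$, some successor in $W_\alpha$, which exists by the defining clause. Any move from a rank-$(\alpha+1)$ configuration then lands in one of rank at most $\alpha$: for player B's positions by the choice of $\sigma_B$, and for player A's positions because every successor already lies in $W_\alpha$ by construction. Well-foundedness of the ordinals forces the resulting play to reach $W_0 = \confset_F$ after finitely many moves, so $\sigma_B$ is winning from every $\conf \in W$ against an arbitrary opponent.

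For player A, the complement $\overline{W}$ is a trap. If $\conf \in \overline{W} \cap \confset_A$ then $\post\of\conf \not\subseteq W$, so some successor lies in $\overline{W}$; and if $\conf \in \overline{W} \cap \confset_B$ then $\post\of\conf \cap W = \emptyset$ by the definition of $W$, so every successor lies in $\overline{W}$. Picking, positionally for each $\conf \in \overline{W} \cap \confset_A$, some successor in $\overline{W}$ yields a strategy $\sigma_A$ under which every play starting in $\overline{W}$ stays in $\overline{W}$ forever and hence never visits $\confset_F \subseteq W$. Combining the two halves shows that every configuration is winning for exactly one player and that in each case the winning player has a positional winning strategy.

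The only real subtlety is that the configuration set of a TSO game is infinite, so the attractor construction a priori ranges over ordinals rather than naturals; however the argument uses only well-foundedness of the rank function, so the infinitary setting introduces no extra difficulty. Equivalently, $W$ is the least fixed point of the monotone operator underlying the successor clause and could be obtained via Knaster--Tarski, but the attractor presentation makes the positional strategies visible more directly.
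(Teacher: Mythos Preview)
The paper does not prove this lemma itself; it merely cites it as Proposition~2.21 in Mazala's chapter of \cite{DBLP:conf/dagstuhl/2001automata}. Your attractor construction is precisely the standard argument behind that cited result and is correct as written, including the transfinite version needed here because the configuration space of a TSO game is infinite.
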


Since we only consider safety games in this paper, strategies will be considered to be positional unless explicitly stated otherwise.
Furthermore, \autoref{lem:positional} implies the following:
\begin{itemize}
    \item $\conf_A \in \confset_A$ is winning for player A $\iff$ there is $\conf_B \in \post(\conf_A)$ that is winning for player A.
    \item $\conf_B \in \confset_B$ is winning for player A $\iff$ all $\conf_A \in \post(\conf_B)$ are winning for player A.
\end{itemize}

A \emph{finite game} is a game with a finite set of configurations.
It is rather intuitive that the safety problem is decidable for finite games, e.g. by applying a backward induction algorithm.
In particular, the winning configurations for each player are computable in linear time:
\begin{lemma}[Chapter 2 in \cite{DBLP:conf/dagstuhl/2001automata}]
\label{lem:finite}
    Computing the set of winning configurations for a finite game with $n$ configurations and $m$ transitions is in $\bigO(n+m)$.
\end{lemma}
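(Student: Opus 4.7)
The plan is to compute the set $W$ of winning configurations for player B (the reachability player) by a backward attractor computation starting from $\confset_F$, and argue that $\confset \setminus W$ is exactly the winning set for player A, as guaranteed by \autoref{lem:positional}. The algorithm initialises $W := \confset_F$, pushes every element of $\confset_F$ onto a worklist, and precomputes for each $\conf \in \confset_A$ a counter $\mathrm{cnt}[\conf] := \Sizeof{\post(\conf)}$. While the worklist is nonempty, pop some $\conf$ and iterate over its predecessors $\conf' \in \pre(\conf)$: if $\conf' \in \confset_B \setminus W$, add $\conf'$ to $W$ and to the worklist; if $\conf' \in \confset_A$, decrement $\mathrm{cnt}[\conf']$ and, if it reaches $0$ and $\conf' \notin W$, add $\conf'$ to $W$ and to the worklist. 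The predecessor lists can be built in one pass in $\bigO(n+m)$ time.

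For correctness, I would first show by induction on the iteration at which a configuration is inserted into $W$ that player B has a positional winning strategy from every $\conf \in W$. For $\conf \in \confset_B \cap W$, B chooses the successor whose insertion into $W$ triggered $\conf$'s inclusion; for $\conf \in \confset_A \cap W$, the counter condition ensures every successor already lies in $W$, so A cannot escape. The resulting play strictly decreases the insertion index and therefore reaches $\confset_F$. Conversely, if $\conf \notin W$, I would define a positional strategy for A that stays in $\confset \setminus W$: from $\conf \in \confset_A \setminus W$, the counter never dropped to $0$, so some successor is still outside $W$, and A picks it; from $\conf \in \confset_B \setminus W$, every successor must lie outside $W$, because any successor in $W$ would have caused $\conf$ to be enqueued as a predecessor of a $W$-element. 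Hence every play consistent with this strategy avoids $\confset_F \subseteq W$, so A wins. Combined, $W$ coincides exactly with the winning region of B, and its complement with the winning region of A.

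For the complexity, each configuration is pushed onto the worklist at most once, since we check $\conf' \notin W$ before adding. The work done when popping $\conf$ is proportional to $\Sizeof{\pre(\conf)}$, and summing over all pops gives $\sum_{\conf \in W} \Sizeof{\pre(\conf)} \le m$. Together with the $\bigO(n+m)$ initialisation (building reverse adjacency lists, setting up $W$ and the counters), the algorithm runs in $\bigO(n+m)$ time.

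The main obstacle is simply the careful double direction of the correctness proof; in particular the argument that a configuration left outside $W$ admits a strategy for A that remains outside $W$ forever relies on the invariant that the counters and the worklist together saturate $W$ under the appropriate ``some/all successors in $W$'' closure condition, which must be made explicit. The complexity bound, once the amortised analysis is set up via the counters, is a routine accounting argument.
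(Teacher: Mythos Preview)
Your proposal is correct: this is exactly the standard backward attractor computation with per-vertex counters, and both the correctness argument (the two directions via the insertion-index induction and the stay-outside-$W$ strategy) and the amortised $\bigO(n+m)$ accounting are sound.

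Note, however, that the paper does not actually prove \autoref{lem:finite}: it simply cites it as a known result from the referenced volume. So there is no ``paper's own proof'' to compare against; your write-up supplies a self-contained argument where the paper relies on the literature. The approach you give is precisely the textbook one underlying that citation, so nothing is genuinely different---you have just filled in what the paper chose to outsource.
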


\subsection{TSO games}

A TSO program $\program = \tuple{\process^\pid}_{\pid \in \indexset}$ and a set of final local states $\stateset_F^\program \subseteq \stateset^\program$ induce a safety game $\game(\program,\stateset_F^\program) = \tuple{ \confset, \confset_A, \confset_B, \to, \confset_F }$ as follows.
The sets $\confset_A$ and $\confset_B$ are copies of the set $\confset^\program$ of TSO configurations, annotated by $A$ and $B$, respectively: $\confset_A := \set{ \conf_A \mid \conf \in \confset^\program}$ and $\confset_B := \set{ \conf_B \mid \conf \in \confset^\program}$.
The set of final configurations is defined as $\confset_F := \set{ \tuple{ \statemap, \buffermap, \memorymap }_A \in \confset_A \mid \exists\: \pid \in \indexset: \statemap\of\pid \in \stateset_F^\program}$, i.e. the set of all configurations where at least one process is in a final state.
The transition relation $\to$ is defined by the following rules:
\begin{itemize}
    \item For each transition $\conf \to[\instr_\pid]_\program \conf'$ where $\conf, \conf' \in \confset^\program$, $\pid \in \indexset$ and $\instr \in \instrs$, it holds that $\conf_A \to \conf'_B$ and $\conf_B \to \conf'_A$.
    This means that each player can execute any TSO instruction, but they take turns alternatingly.
    \item \emph{If player A can update before her own turn:}
    For each transition $\conf_A \to \conf'_B$ introduced by any of the previous rules, it holds that $\tilde\conf_A \to \conf'_B$ for all $\tilde\conf$ with $\tilde\conf \to[\up\kstar]_\program \conf$.
    \item \emph{If player A can update after her own turn:}
    For each transition $\conf_A \to \conf'_B$ introduced by any of the previous rules, it holds that $\conf_A \to \tilde\conf'_B$ for all $\tilde\conf'$ with $\conf' \to[\up\kstar]_\program \tilde\conf'$.
    \item The update rules for player B are defined in a similar manner.
\end{itemize}

From this definition, we obtain 16 different variants of TSO games, which differ in whether each of the players can update \emph{never}, \emph{before} her turn, \emph{after} her turn, or \emph{always} (before and after her turn).
We group games with similar decidability and complexity results together.
An overview of these four groups is presented in \autoref{fig:tso-groups}.
Each group is described in detail in the following sections.

But first, we present a general result that gives a lower complexity bound for all groups of TSO games.
Unexpectedly, even a single process is enough to show \exptime-hardness.
We prove this by reducing the \emph{word acceptance problem} of \emph{linearly bounded alternating Turing machines} (ATM) to the safety problem of a single-process TSO game.
The idea is to store the state and head position of the ATM in the local state of the process, and use a set of variables to save the word on the working tape.
Based on the alternations of the Turing machine, either player A or player B decides which transition the program will simulate.
Interestingly, we can argue that the exact type of TSO game is irrelevant.
Moreover, the construction does not make use of the memory buffers, which implies that the result would even hold if the program followed SC semantics.
The formal proof can be found in Appendix A of the extended version of this paper \cite{spengler2023tso}.

\begin{theorem}
\label{thm:complexity}
    The safety problem for TSO games is \exptime-hard.
\end{theorem}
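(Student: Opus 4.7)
The plan is to reduce the word-acceptance problem of linearly bounded alternating Turing machines (LBATMs) --- \exptime-complete by~\cite{DBLP:journals/jacm/ChandraKS81} --- to the safety problem for TSO games. Given an LBATM $M = \tuple{\stateset_\exists \dotcup \stateset_\forall, \Gamma, \transition_M, \state_0, \state_\mathit{acc}}$ and an input $w = w_1 \cdots w_n \in \Gamma^n$, I will construct in polynomial time a single-process program $\program$, a set of final local states $\stateset_F^\program$ and an initial configuration $\conf_0$ such that $\conf_0$ is winning for player B in $\game(\program, \stateset_F^\program)$ iff $M$ accepts $w$. Because \exptime\ is closed under complement, this suffices for hardness.

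\textbf{Encoding.} Take $n$ shared variables $\varset = \set{\xvar_1, \dots, \xvar_n}$ over $\valset = \Gamma$, initialised by $\memorymap_0\of{\xvar_\pos} = w_\pos$, to hold the tape of $M$. The local states of the single process are either \emph{main} states $\tuple{\state, \pos}$, meaning ``$M$ is in state $\state$ with head on cell $\pos$'', or auxiliary states internal to a gadget. For every transition $\tau : \tuple{\state, a} \to \tuple{\state', b, D}$ of $M$ and every head position $\pos$, add the gadget
\begin{equation*}
    \tuple{\state, \pos} \to[\rd\of{\xvar_\pos, a}] \hstate_{\tau,\pos}^1 \to[\wr\of{\xvar_\pos, b}] \hstate_{\tau,\pos}^2 \to[\nop] \cdots \to[\nop] \tuple{\state', \pos + D},
\end{equation*}
padding the trailing $\nop$-block so that the total gadget length is $3$ when $\state, \state'$ belong to different alternation classes and $4$ when they belong to the same one. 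Place a $\nop$-self-loop at every $\tuple{\state_\mathit{acc}, \pos}$ and a $\nop$-edge to a non-final absorbing sink from any main state with no applicable $M$-transition. Set $\stateset_F^\program := \set{\tuple{\state_\mathit{acc}, \pos} \mid 1 \le \pos \le n}$, and let $\conf_0$ be the $X$-copy of the initial TSO configuration (process in $\tuple{\state_0, 1}$, empty buffer, memory $\memorymap_0$), with $X = B$ if $\state_0 \in \stateset_\exists$ and $X = A$ otherwise.

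\textbf{Parity argument.} Since the two players alternate strictly while $M$ interleaves $\exists$- and $\forall$-states arbitrarily, the core step is an induction on simulated step count establishing the invariant that at every main configuration with local state $\tuple{\state, \pos}$ the player on move is the one controlling $\state$'s type (player B for $\stateset_\exists$, player A for $\stateset_\forall$). Because each gadget uses $3$ or $4$ moves exactly when the types of $\state$ and $\state'$ disagree or agree, and players' turns flip on every move, the invariant is preserved by construction. At a main state the moving player then chooses among those gadget-entry $\rd$ transitions whose read symbol equals the current value of $\xvar_\pos$, faithfully mirroring $M$'s non-deterministic choice among applicable transitions; every auxiliary state has outdegree~$1$, so all other moves are forced. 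This parity bookkeeping is the main technical obstacle.

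\textbf{Buffers are harmless.} With only one process, rule \textsf{read-own-write} guarantees that each $\rd$ on a variable returns the value of the most recent $\wr$ on it, whether that write has been flushed or not, so the player-controlled $\up$-moves are observationally invisible. Hence the induced game is the same across all sixteen update-policy variants (and would even work under SC), which is why the statement is uniform in the game type. A routine induction on the height of $M$'s alternating computation tree finally shows $\conf_0$ winning for B iff $M$ accepts $w$; since $\sizeof\program \in \mathrm{poly}(\sizeof{M}, n)$ and the construction is computable in polynomial time, \exptime-hardness follows.
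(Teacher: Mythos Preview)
Your proposal is correct and follows essentially the same approach as the paper: a polynomial-time reduction from the word problem for linearly bounded alternating Turing machines to a single-process TSO game, storing the ATM state and head position in the local state, the tape contents in $n$ shared variables, and arguing that with a single process the read-own-write rule renders buffer updates observationally irrelevant so that the construction is uniform across all sixteen game variants. Your explicit parity bookkeeping via gadgets of length $3$ or $4$ is a clean concrete instantiation of the alternation mechanism the paper only sketches in the main text (deferring details to the appendix).
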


\section{Group I}

All TSO games in this group have the following in common:
There is one player that can update messages \emph{after} her turn, and the other player can update messages \emph{before} her turn.
Both players might be allowed to do more than that, but fortunately we do not need to differentiate between those cases.
In the following, we call the player that updates after her turn \emph{player X}, and the other one \emph{player Y}.
Although the definition of safety games seems to be of asymmetric nature (player B tries to \emph{reach} a final configuration, while player A tries to \emph{avoid} them), the proof does not rely on the exact identity of player X and Y.

In this section, given a configuration $\conf$, we write $\bar\conf$ to denote the unique configuration obtained from $\conf$ after updating all messages to the memory.
More formally, $\conf \to[\up\kstar] \bar\conf$ and all buffers of $\bar\conf$ are empty.

Let $\game = \tuple{ \confset, \confset_A, \confset_B, \to, \confset_F }$ be a TSO game as described above, currently in some configuration $\conf_0 \in \confset$.
We first consider the situation where player X has a winning strategy $\sigma_X$ from $\conf_0$.
Let $\sigma_Y$ be an arbitrary strategy for player Y and define two more strategies $\bar\sigma_X: \conf \mapsto \overline{\sigma_X\of\conf}$ and $\bar\sigma_Y: \conf \mapsto \sigma_Y(\bar\conf)$.
That is, they act like $\sigma_X$ and $\sigma_Y$, respectively, with the addition that $\bar\sigma_X$ empties the buffer \emph{after} each turn and $\bar\sigma_Y$ empties the buffer \emph{before} each turn.
From the definitions it follows directly that $\bar\sigma_Y(\sigma_X(\conf)) = \sigma_Y(\bar\sigma_X(\conf))$ for all $\conf \in \confset_X$.
An example can be seen in \autoref{fig:group-I}.

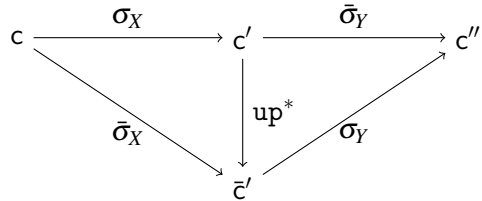
\begin{figure}
\centering
\begin{tikzpicture}[xscale=3,yscale=-2]
    \node	at (0,0)	(c0) {$\conf$};
    \node	at (1,0)	(c1) {$\conf'$};
    \node	at (1,1)	(c1b){$\bar\conf'$};
    \node	at (2,0)	(c2) {$\conf''$};

    \draw[->] (c0) -- node[above] {$\sigma_X$}      (c1);
    \draw[->] (c0) -- node[below] {$\bar\sigma_X$}  (c1b);
    \draw[->] (c1) -- node[right] {$\up\kstar$}     (c1b);
    \draw[->] (c1) -- node[above] {$\bar\sigma_Y$}  (c2);
    \draw[->] (c1b)-- node[below] {$\sigma_Y$}      (c2);
\end{tikzpicture}
\caption{Commutative diagram of strategies in games of group I.}
\label{fig:group-I}
\end{figure}

We argue that $\bar\sigma_X$ is a winning strategy for player X.
The intuition behind this is as follows:
Using the notation of \autoref{fig:group-I}, if a configuration $\conf''$ is reachable from $\bar\conf'$, then it is also reachable from $\conf'$, since player Y can empty all buffers at the start of her turn and then proceed as if she started in $\bar\conf'$.
On the other hand, there might be configurations reachable from $\conf'$ but not $\bar\conf'$, for example a read transition might get disabled by one of the buffer updates.
Thus, player X never gets a disadvantage by emptying the buffers.

% The formal proof is given in the following.

\begin{claim}
\label{claim:ab1}
    $\bar\sigma_X$ is a winning strategy from $\conf_0$.
\end{claim}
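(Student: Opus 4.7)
The plan is to fix an arbitrary positional Y-strategy $\sigma_Y$, form the companion strategy $\bar\sigma_Y$ as defined in the text, and compare the two plays $\play_1 := \play(\conf_0, \sigma_X, \bar\sigma_Y)$ and $\play_2 := \play(\conf_0, \bar\sigma_X, \sigma_Y)$. Since $\sigma_X$ is winning by assumption, $\play_1$ is winning for X, and the argument will transfer this property to $\play_2$.

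First I would verify that $\bar\sigma_X$ and $\bar\sigma_Y$ are legal strategies in $\game$. Because X may update after her turn, a single game move from $\conf \in \confset_X$ to $\overline{\sigma_X\of\conf}$ is permitted: it corresponds to performing the instruction chosen by $\sigma_X$ followed by a $\up\kstar$ sequence flushing every buffer, which is exactly the "update after" transition rule of Section 4.2. Dually, because Y may update before her turn, a move from $\conf \in \confset_Y$ to $\sigma_Y(\bar\conf)$ consists of the prefix $\conf \to[\up\kstar] \bar\conf$ followed by the instruction selected by $\sigma_Y$ at $\bar\conf$.

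Next I would assume $\conf_0 \in \confset_X$ and prove by induction on $i$ that (i) the configurations at index $2i$ in $\play_1$ and $\play_2$ coincide, and (ii) the configurations at index $2i+1$ share the same global state (possibly differing in buffer and memory). The base case is immediate, and the inductive step is exactly the commutative square of \autoref{fig:group-I}: from a shared $\conf_{2i}$, both plays reach $\sigma_Y(\overline{\sigma_X(\conf_{2i})})$ after two further moves, so they reconverge at index $2i+2$, while the intermediate configurations $\sigma_X(\conf_{2i})$ and $\overline{\sigma_X(\conf_{2i})}$ differ only by a $\up\kstar$ prefix and therefore have identical global state. The case $\conf_0 \in \confset_Y$ reduces to this one by letting Y take her first move and noting that $\sigma_X$ remains winning from every resulting $\conf_1 \in \confset_X$.

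The conclusion then uses that membership in $\confset_F$ is determined solely by the global state together with the A/B-annotation. Since $\confset_F \subseteq \confset_A$, final configurations can only occur at every second step. If X = A they appear at even indices, where both plays agree pointwise; if X = B they appear at odd indices, where the two plays share the same global state, so they enter $\confset_F$ synchronously. Either way $\play_2$ inherits the winning status of $\play_1$. The main obstacle is precisely this last observation in the X = B case: the odd-indexed configurations may carry completely different buffers and memories, and it is only because $\confset_F$ ignores those components that the "bar" trick on player X is harmless.
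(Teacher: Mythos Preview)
Your proposal is correct and follows essentially the same route as the paper: compare $\play(\conf_0,\sigma_X,\bar\sigma_Y)$ with $\play(\conf_0,\bar\sigma_X,\sigma_Y)$ using the commutativity $\bar\sigma_Y(\sigma_X(\conf)) = \sigma_Y(\bar\sigma_X(\conf))$, observe that the two plays coincide at $\confset_X$-indices and share global states at $\confset_Y$-indices, and conclude because winning is determined by the sequence of visited global states; the case $\conf_0\in\confset_Y$ is handled by passing to $\post(\conf_0)$. Your explicit legality check for $\bar\sigma_X,\bar\sigma_Y$ and your separate treatment of $X=A$ versus $X=B$ (finals at even vs.\ odd indices) are a bit more detailed than the paper's one-line ``a winning play is clearly determined by the sequence of visited global TSO states,'' but the argument is the same.
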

\begin{proof}
    \underline{\textbf{Case} $\conf_0 \in \confset_X$:}
    Since $\bar\sigma_Y(\sigma_X(\conf)) = \sigma_Y(\bar\sigma_X(\conf))$ for all $\conf \in \confset_X$, the plays $\play_1 = \play(\conf_0, \sigma_X, \bar\sigma_Y)$ and $\play_2 = \play(\conf_0, \bar\sigma_X, \sigma_Y)$ agree on every second configuration, i.e. the configurations in $\confset_X$.
    Moreover, the configurations in between (after an odd number of steps) at least share the same global state, i.e. $\statemap(\sigma_X(\conf)) = \statemap(\bar\sigma_X(\conf))$.
    In particular, the sequence of visited global TSO states is the same in both plays.
    Since $\sigma_X$ is a winning strategy from $\conf_0$, it means that $\play_1$ is winning for player X.
    This means that $\play_2$ is also winning, because for both players, a winning play is clearly determined by the sequence of visited global TSO states.
    Because we chose $\sigma_Y$ arbitrarily, it follows that $\bar\sigma_X$ is a winning strategy.

    \underline{\textbf{Case} $\conf_0 \in \confset_Y$:}
    For the other case, we consider the configurations in $\post(\conf_0)$ instead.
    We observe that $\sigma_X$ must be a winning strategy for all $\conf \in \post(\conf_0)$.
    We apply the first case of this proof to each of these configurations and obtain that $\bar\sigma_X$ is a winning strategy for all of them.
    It follows that $\bar\sigma_X$ is a winning strategy for $\conf_0$.
\end{proof}

Suppose that player X plays her modified strategy as described above.
We observe that after at most two steps, every play induced by her strategy and an arbitrary strategy of the opposing player only visits configurations with at most one message in the buffers:
Player X will empty all buffers at the end of each of her turns and player Y can only add at most one message to the buffers in between.
Hence, they can play on a finite set of configurations instead.

To show this, we construct a finite game $\game' = \tuple{ \confset', \confset_A', \confset_B', \to', \confset_F'}$ as follows.
$\confset_Y'$ contains all configurations of $\confset_Y$ that have at most one buffer message, i.e. $\set{ \tuple{\statemap, \buffermap, \memorymap}_Y \in \confset_Y \mid \sum_{\pid\in\indexset} \sizeof{\buffermap\of\pid} \leq 1 }$.
If $\conf_0 \in \confset_Y$, we also add it to $\confset_Y'$, otherwise to $\confset_X'$.
Lastly, we add $\post(\confset_Y')$ to $\confset_X'$, where $\post$ is with respect to $\game$.
$\to'$ is defined as the restriction of $\to$ to configurations of $\game'$, and $\confset_F' = \confset_F \cap \confset_A'$.
Note that $\confset_X'$ also contains configurations with two messages.
This is needed to account for the case that player Y has a winning strategy, which is handled later in this proof.
Now, let $\bar\sigma_X'$ be the restriction of $\bar\sigma_X$ to $\confset_X'$ (in the mathematical sense, i.e $\bar\sigma_X': \confset_X' \to \confset_Y$ and $\bar\sigma_X\of\conf = \bar\sigma_X'\of\conf$ for all $\conf \in \confset_X'$).

\begin{claim}
\label{claim:ab2}
    $\bar\sigma_X'$ is a winning strategy for $\conf_0$ in $\game'$.
\end{claim}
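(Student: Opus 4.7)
The plan is to leverage \autoref{claim:ab1}, which states that $\bar\sigma_X$ is winning from $\conf_0$ in $\game$, and argue that every play produced by its restriction $\bar\sigma_X'$ against an arbitrary player-Y strategy in $\game'$ is actually also a legal play of $\game$ under $\bar\sigma_X$, hence winning.

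First I would check that $\bar\sigma_X'$ is a well-defined strategy of $\game'$. For any $\conf \in \confset_X'$, the configuration $\bar\sigma_X(\conf)$ has all buffers emptied, so it contains zero buffered messages and is a player-Y configuration; thus by the definition of $\confset_Y'$ it belongs to $\confset_Y'$, and the transition $\conf \to \bar\sigma_X(\conf)$ of $\game$ is preserved in $\game'$.

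Next I would verify by induction on the prefix length that every play $\play' = \play(\conf_0, \bar\sigma_X', \sigma_Y')$, for any Y-strategy $\sigma_Y'$ of $\game'$, stays inside $\confset'$. At a Y-configuration $\conf \in \confset_Y'$ player Y moves to some $\conf' \in \post(\conf) \subseteq \post(\confset_Y') \subseteq \confset_X'$; at an X-configuration $\conf \in \confset_X'$ the move prescribed by $\bar\sigma_X'$ lands in $\confset_Y'$ by the previous paragraph. In particular, $\play'$ is a well-defined infinite play of $\game'$.

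Finally, since $\to'$ is just the restriction of $\to$ to $\confset'$, the play $\play'$ is also a legal play of $\game$. Extending $\sigma_Y'$ arbitrarily to a Y-strategy $\sigma_Y$ on all of $\confset$, we obtain $\play' = \play(\conf_0, \bar\sigma_X, \sigma_Y)$. By \autoref{claim:ab1} this play never visits $\confset_F$, and because $\confset_F' = \confset_F \cap \confset_A'$ it also never visits $\confset_F'$. Hence $\play'$ is winning for player X in $\game'$, and since $\sigma_Y'$ was arbitrary, $\bar\sigma_X'$ is a winning strategy from $\conf_0$. The only subtle point is the buffer-size bookkeeping ensuring closure of the play inside $\confset'$, which is precisely what motivated the asymmetric design of $\confset_X'$ (allowing up to two buffered messages, so that Y's write moves do not escape) and $\confset_Y'$ (requiring at most one).
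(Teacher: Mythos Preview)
Your proposal is correct and follows essentially the same approach as the paper: verify that $\bar\sigma_X'$ lands in $\confset_Y'$ (because the buffers are emptied), extend an arbitrary opponent strategy $\sigma_Y'$ on $\game'$ to a strategy $\sigma_Y$ on $\game$, observe that the induced plays coincide and hence inherit the winning status from \autoref{claim:ab1}.

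One small imprecision: you write that the play ``never visits $\confset_F$'', which presumes that player~X is the safety player~A. In Group~I, however, X is defined structurally (the player who may update \emph{after} her turn) and may be either A or B; the claim is stated for whichever of them happens to have the winning strategy. The transfer of ``winning for X'' from $\game$ to $\game'$ still goes through in both cases because $\confset_F' = \confset_F \cap \confset_A'$ and the play stays in $\confset'$, but the sentence should be phrased symmetrically (``is winning for player~X'') rather than in terms of avoiding $\confset_F$.
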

\begin{proof}
    Looking at the definitions, we confirm that $\bar\sigma_X'$ actually is a valid strategy for $\game'$, i.e. $\bar\sigma_X'(\conf) \in \confset_Y'$, for all $\conf \in \confset_X'$, since $\bar\sigma_X'(\conf)$ has empty buffers.
    (This makes $\bar\sigma_X'$ the restriction of $\bar\sigma_X$ to $\game'$.)
    Consider a strategy $\sigma_Y'$ for player Y in $\game'$ and an arbitrary extension $\sigma_Y$ to $\game$.
    Because $\bar\sigma_X'$ and $\bar\sigma_X$ agree on $\confset_X'$ and $\bar\sigma_Y'$ and $\bar\sigma_Y$ agree on $\confset_Y'$, $\play = \play(\conf_0, \bar\sigma_X', \bar\sigma_Y)$ and $\play' = \play(\conf_0, \bar\sigma_X', \bar\sigma_Y)$ are in fact the exact same play.
    Since $\bar\sigma_X$ is a winning strategy, $\play$ is a winning play, and thus also $\play'$.
    Here, note that $\game$ and $\game'$ agree on the final configurations within $\confset'$.
    Since $\sigma_Y'$ was arbitrary, it follows that $\bar\sigma_X'$ is a winning strategy from $\conf_0$ in $\game'$.
\end{proof}

What is left to show is that a winning strategy for $\game'$ induces a winning strategy for $\game$.
Suppose $\sigma_X'$ is a winning strategy for player X in game $\game'$ for the configuration $\conf_0$.
Let $\sigma_X$ be an arbitrary extension of $\sigma_X'$ to $\game$.

\begin{claim}
\label{claim:ab3}
    $\sigma_X$ is a winning strategy for $\conf_0$ in $\game$.
\end{claim}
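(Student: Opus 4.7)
The plan is to show that the play in $\game$ induced by $\sigma_X$ and any opposing strategy never escapes the configuration set $\confset'$ of the finite game, and therefore coincides step-by-step with a play in $\game'$, whose outcome is already controlled by $\sigma_X'$.

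First I would fix an arbitrary strategy $\sigma_Y$ of player Y in $\game$ and let $\play = \conf_0 \to \conf_1 \to \dots$ denote the induced play $\play(\conf_0, \sigma_X, \sigma_Y)$. By induction on $i$ I would establish the invariant that $\conf_i \in \confset_X'$ whenever it is X's turn and $\conf_i \in \confset_Y'$ whenever it is Y's turn. The base case is immediate from the construction of $\game'$, which places $\conf_0$ into $\confset_X'$ or $\confset_Y'$ according to whose turn it is. For the step, if $\conf_i \in \confset_X'$, then $\conf_{i+1} = \sigma_X(\conf_i) = \sigma_X'(\conf_i) \in \confset_Y'$, using that $\sigma_X$ extends $\sigma_X'$ and that $\sigma_X'$ is a valid strategy in $\game'$; if instead $\conf_i \in \confset_Y'$, then $\conf_{i+1} = \sigma_Y(\conf_i) \in \post(\conf_i) \subseteq \post(\confset_Y') \subseteq \confset_X'$ directly by the definition of $\confset_X'$. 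Crucially, the play never queries $\sigma_X$ at a configuration outside $\confset_X'$, so the arbitrariness of the extension plays no role.

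Next I would define $\sigma_Y'$ to be the restriction of $\sigma_Y$ to $\confset_Y'$; by the same reasoning as above its image lies in $\confset_X'$, so it is a valid strategy for $\game'$. Since $\to'$ is the restriction of $\to$ to $\confset'$ and $\play$ stays inside $\confset'$, the sequence $\play$ is literally identical to $\play(\conf_0, \sigma_X', \sigma_Y')$ viewed as a play of $\game'$. Because $\sigma_X'$ is winning in $\game'$, this play realises player X's objective with respect to $\confset_F' = \confset_F \cap \confset_A'$, and since the play never leaves $\confset'$ this coincides with realising the objective with respect to $\confset_F$ in $\game$. Hence $\play$ is winning for X in $\game$, and as $\sigma_Y$ was arbitrary, $\sigma_X$ is a winning strategy from $\conf_0$. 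The only mild subtlety is confirming that the inductive invariant is strong enough to handle either parity of $\conf_0$ and that the identification of final states across the two games is automatic once the play is trapped in $\confset'$; all the real work on TSO semantics and buffer emptying was already done in \autoref{claim:ab1} and \autoref{claim:ab2}.
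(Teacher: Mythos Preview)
Your proposal is correct and follows essentially the same approach as the paper: restrict $\sigma_Y$ to $\confset_Y'$, observe that the induced plays in $\game$ and $\game'$ coincide, and transfer the winning verdict. You spell out the inductive invariant (that the play never leaves $\confset'$) more explicitly than the paper, which simply asserts that ``the outgoing transitions of every $\conf \in \confset_Y'$ are the same in both $\game$ and $\game'$'' and that the plays are identical; but the underlying argument is the same.
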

\begin{proof}
    Let $\sigma_Y$ be a strategy of player Y in $\game$ and $\sigma_Y'$ the restriction of $\sigma_Y$ to $\confset_Y'$ (again, in the mathematical sense).
    Since the outgoing transitions of every $\conf \in \confset_Y'$ are the same in both $\game$ and $\game'$, $\sigma_Y'$ is a strategy for $\game'$ (and the restriction of $\sigma_Y$ to $\game'$).
    Furthermore, starting from $\conf_0$, we see that $\sigma_X$ and $\sigma_Y$ induce the exact same play in $\game$ as $\sigma_X'$ and $\sigma_Y'$ in $\game'$.
    Since the former play is winning, so must be the latter one.
\end{proof}

Now, we quickly cover the situation where it is player Y that has a winning strategy.
We follow the same arguments as previously, with minor changes.
This time, assume $\sigma_Y$ to be a winning strategy and let $\sigma_X$ be arbitrary.
Define $\bar\sigma_X$ and $\bar\sigma_Y$ as above.
Following the beginning of the proof of \autoref{claim:ab1}, we can conclude that the sequence of visited global TSO states is the same in both play $\play_1$ and $\play_2$.
For the remainder of the proof, we swap the roles of X and Y and obtain that $\bar\sigma_Y$ is a winning strategy.

Let $\bar\sigma_Y'$ be the restriction of $\bar\sigma_Y$ to $\confset_Y'$.
Since $\bar\sigma_Y'(\confset_Y') = \bar\sigma_Y(\confset_Y') \subseteq \post(\confset_Y') \subseteq \confset_X'$, it follows that $\bar\sigma_Y'$ is a strategy of $\game'$ ($\post$ is again with respect to $\game$).
Consider a strategy $\sigma_X'$ for player X in $\game'$ and an arbitrary extension $\sigma_X$ to $\game$.
Similar as in \autoref{claim:ab2}, we see that $\play(\conf_0, \bar\sigma_X', \bar\sigma_Y) = \play(\conf_0, \bar\sigma_X', \bar\sigma_Y)$ and conclude that $\bar\sigma_Y'$ is a winning strategy.

The other direction follows from the proof of \autoref{claim:ab3}, with the roles of X and Y swapped.

\begin{theorem}
\label{thm:ab}
    The safety problem for games of group I is \exptime-complete.
\end{theorem}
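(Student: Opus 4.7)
The plan is to combine the reductions already established by Claims \ref{claim:ab1}--\ref{claim:ab3} with the complexity results of \autoref{thm:complexity} and \autoref{lem:finite}. The \exptime-hardness is immediate from \autoref{thm:complexity}, since that lower bound applies uniformly to every variant of TSO game and in particular to the games of group I. Hence the only remaining task is to derive a matching \exptime\ upper bound.

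For the upper bound, the three claims, together with the symmetric argument for player Y sketched just above the theorem statement, show that for every initial configuration $\conf_0$ the winner of $\game$ from $\conf_0$ coincides with the winner of the finite game $\game'$ from $\conf_0$. It therefore suffices to solve $\game'$ and to argue that this can be done in exponential time in the input size $n$. I would bound $\sizeof{\confset'}$ as follows. By construction, every configuration in $\confset'$ has at most two messages distributed across all buffers (one for configurations of $\confset_Y'$, at most two after one additional move of player Y into $\confset_X'$). Hence $\sizeof{\confset'}$ is at most the product of (i) the number of global program states $\prod_{\pid \in \indexset}\sizeof{\stateset^\pid}$, (ii) the number of memory valuations $\sizeof{\valset}^{\sizeof{\varset}}$, and (iii) the number of buffer profiles of total length at most two, which is polynomial in $\sizeof{\indexset} \cdot \sizeof{\varset} \cdot \sizeof{\valset}$. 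All three factors are singly exponential in $n$, so $\sizeof{\confset'} = 2^{O(n)}$, and the transition relation $\to'$ can be enumerated in time polynomial in $\sizeof{\confset'}$ directly from the semantic rules of \autoref{fig:tso-semantics} and the update-closure clauses in the definition of a TSO game.

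Applying \autoref{lem:finite} to $\game'$ then computes its set of winning configurations in time $O(\sizeof{\confset'} + \sizeof{\to'}) = 2^{O(n)}$, which lies in \exptime. The main subtlety — that truncating the game to buffers of length at most two does not change the winner — is already fully handled by Claims \ref{claim:ab1}--\ref{claim:ab3} and their symmetric counterparts; beyond them, no further combinatorial argument is required, only the size bound above and an invocation of \autoref{lem:finite}. I expect the only point that needs a little care in the write-up to be the verification that the transitions added by the \emph{before}/\emph{after} update clauses stay within $\confset'$ on the relevant side of the partition, which follows because $\bar{\sigma}_X$ always lands in configurations with empty buffers and $\sigma_Y$ adds at most one new message before player X moves again.
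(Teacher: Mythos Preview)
Your proposal is correct and follows the same approach as the paper's own proof: use Claims \ref{claim:ab1}--\ref{claim:ab3} together with their player-Y counterparts to show that $\game$ and $\game'$ have the same winner from $\conf_0$, then invoke \autoref{lem:finite} on the exponentially-sized $\game'$ for membership and \autoref{thm:complexity} for hardness. The paper's proof is even terser and does not spell out the size estimate you give; the only minor imprecision in your count is that $\conf_0$ itself (with possibly nonempty buffers) is also explicitly added to $\confset'$, but this is a single input configuration and does not affect the exponential bound.
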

\begin{proof}
    By \autoref{claim:ab1} and \autoref{claim:ab2}, if a configuration $\conf_0$ is winning for player X in $\game$, then it is also winning in $\game'$.
    The reverse holds by \autoref{claim:ab3}.
    The equivalent statement for player Y follows from results outlined above.
    Thus, the safety problem for $\game$ is equivalent to the safety problem for $\game'$.
    $\game'$ is finite and has exponentially many configurations.
    \exptime-completeness follows immediately from \autoref{lem:finite} (membership) and \autoref{thm:complexity} (hardness).
\end{proof}

\begin{remark}
    In the game where both players are allowed to update the buffer at any time, we can show an interesting conclusion.
    By \autoref{claim:ab1} and the equivalent statement for the second player, we can restrict both players to strategies that empty the buffer after each turn.
    Thus, the game is played only on configurations with empty buffer, except for the initial configuration which might contain some buffer messages.
    This implies that the TSO program that is described by the game implicitly follows SC semantics.
\end{remark}

\section{Group II}

This group contains TSO games where both players are allowed to update the buffer \emph{only} before their own move.
Let player X be the player that has a winning strategy and player Y her opponent.
Note that this differs from the previous section, in which the players X and Y were defined based on their updating capabilties.

Similar to the argumentation for Group I, we want to show that player X also has a winning strategy where she empties the buffer in each move.
But, in contrast to before, this time there is an exception:
Since the player has to update the buffer \emph{before} her move, by updating a memory variable she might disable a read transition that she intended to execute.
Thus, we do not require her to empty the buffer in that case.

Formally, let $\game = \tuple{ \confset, \confset_X, \confset_Y, \to, \confset_F}$ be a TSO game where both players are allowed to perform buffer updates exactly before their own moves.
Suppose $\sigma_X$ is a winning strategy for player X and some configuration $\conf_0$.
We construct another strategy $\bar\sigma_X$ for player X.
Let $\conf \in \confset_X$, $\conf' = \sigma_X(\conf)$ and $\bar\conf$ as in the previous section, i.e. the unique configuration such that $\conf \to[\up\kstar]_\program \bar\conf$ and the buffers of $\conf$ are empty.
Suppose that $\conf \to[\instr_\pid]_\program \conf'$, where $\instr_\pid$ is not a read instruction.
Then, starting from $\conf$, updating all buffer messages does not change that the transition from $\statemap(\conf)\of\pid$ to $\statemap(\conf')\of\pid$ is enabled.
Thus, $\instr_\pid$ can also be executed from $\bar\conf$.
We call the resulting configuration $\tilde\conf'$ and observe that $\bar\conf \to_\program \tilde\conf'$ and $\conf' \to[\up\kstar] \tilde\conf'$.
We define $\bar\sigma_X(\conf) = \tilde\conf'$.
This can be seen in \autoref{fig:group-II}.
Note that $\tilde\conf'$ may have at most one message in its buffers.
In the other case, where there is no transition from $\conf$ to $\conf'$ other than read instructions, we define $\bar\sigma_X(\conf) = \sigma_X(\conf) = \conf'$.

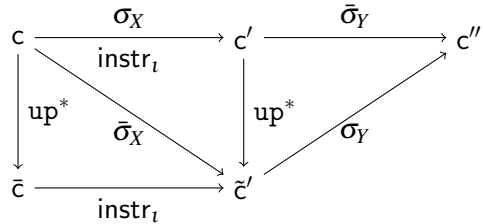
\begin{figure}
\centering
\begin{tikzpicture}[xscale=3,yscale=-1]
    \node	at (0,-1)	(c0) {$\conf$};
    \node	at (0,1)	(c0b){$\bar\conf$};
    \node	at (1,-1)	(c1) {$\conf'$};
    \node	at (1,1)	(c1t){$\tilde\conf'$};
    \node	at (2,-1)	(c2) {$\conf''$};

    \draw[->] (c0) -- node[above] {$\sigma_X$}
                      node[below] {$\instr_\pid$}   (c1);
    \draw[->] (c0) -- node[below] {$\bar\sigma_X$}  (c1t);
    \draw[->] (c0) -- node[right] {$\up\kstar$}     (c0b);
    \draw[->] (c0b)-- node[below] {$\instr_\pid$}   (c1t);
    \draw[->] (c1) -- node[right] {$\up\kstar$}     (c1t);
    \draw[->] (c1) -- node[above] {$\bar\sigma_Y$}  (c2);
    \draw[->] (c1t)-- node[below] {$\sigma_Y$}      (c2);

    % \draw[thick,decoration={brace},decorate] (c1b.north -| c2.east) -- node[right]{\ empty buffers} (c1b.south -| c2.east);
\end{tikzpicture}
\caption{Commutative diagram of strategies in games of group II, in the case where $\instr_\pid \neq \rd\of\xd$.}
\label{fig:group-II}
\end{figure}

\begin{claim}
\label{claim:bb1}
    $\bar\sigma_X$ is a winning strategy for $\conf_0$.
\end{claim}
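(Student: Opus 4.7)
The plan is to follow the structure of \autoref{claim:ab1}: fix an arbitrary Y-strategy $\sigma_Y$ and couple the two plays $\play_1 = \play(\conf_0, \sigma_X, \bar\sigma_Y)$ and $\play_2 = \play(\conf_0, \bar\sigma_X, \sigma_Y)$, where $\bar\sigma_Y$ is a history-dependent auxiliary strategy built inductively along $\play_1$. Assuming first that $\conf_0 \in \confset_X$, the coupling invariant I aim to maintain is that after each X-move step $n$, the configurations $\play_1[n]$ and $\play_2[n]$ share the same global state and satisfy $\play_1[n] \to[\up\kstar] \play_2[n]$, while after each Y-move step they coincide outright.

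The X-step follows directly from the case distinction in the definition of $\bar\sigma_X$: if $\sigma_X$'s transition from $\conf$ is a read, $\bar\sigma_X$ and $\sigma_X$ agree at $\conf$; otherwise, the relation $\conf' \to[\up\kstar] \tilde\conf'$ observed in the text preceding the claim supplies exactly the required connection, with matching global states since the instruction only rewrites the local state of the acting process. For the Y-step, suppose $\play_1[n-1] = \conf'$ and $\play_2[n-1] = \tilde\conf'$ with $\conf' \to[\up\kstar] \tilde\conf'$; by the Group II semantics, the move $\sigma_Y(\tilde\conf')$ decomposes as $\tilde\conf' \to[\up\kstar] \hat\conf \to[\instr_\pid] \sigma_Y(\tilde\conf')$, and concatenating the two update sequences shows that Y can legally produce the same target from $\conf'$ by first updating to $\hat\conf$ and then firing the same instruction. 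I set $\bar\sigma_Y$ at the current history of $\play_1$ to execute exactly this move, so that $\play_1[n] = \play_2[n] = \sigma_Y(\tilde\conf')$. Since $\confset_F$ is determined by the global state, the two plays share a winner; as $\sigma_X$ is winning, $\play_1$ is won by X, so $\play_2$ is too, and because $\sigma_Y$ was arbitrary, $\bar\sigma_X$ wins from $\conf_0$.

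The case $\conf_0 \in \confset_Y$ reduces to the first by applying it at every successor $\conf \in \post(\conf_0) \subseteq \confset_X$, on each of which $\sigma_X$ must also be winning. I anticipate the main obstacle to be closing the coupling after a Y-step: the plays genuinely diverge in buffer content after any non-read X-move (because $\bar\sigma_X$ updates the pre-move buffer while $\sigma_X$ does not), and equality can be restored only because Y is permitted to update \emph{before} her move in Group II, so she can absorb the extra updates produced by $\bar\sigma_X$ as a prefix of her own update segment. The write-up therefore requires careful bookkeeping of update sequences, in particular verifying at the Y-step that in $\play_1$, Y does not fire her instruction directly from $\conf'$ but first updates to the same intermediate configuration $\hat\conf$ that the Group II decomposition of $\sigma_Y(\tilde\conf')$ produces.
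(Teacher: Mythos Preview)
Your proposal is correct and follows essentially the same approach as the paper: fix an arbitrary $\sigma_Y$, define an auxiliary history-dependent strategy $\bar\sigma_Y$ so that $\play(\conf_0,\sigma_X,\bar\sigma_Y)$ and $\play(\conf_0,\bar\sigma_X,\sigma_Y)$ coincide on the $\confset_X$-positions and share global states everywhere, then transfer the winning status, with the $\confset_Y$ case handled via successors. The paper writes $\bar\sigma_Y(\conf,\conf') = \sigma_Y(\bar\sigma_X(\conf))$ directly (depending on the last two configurations) and leaves the legality of this move implicit; your version makes the invariant $\play_1[n] \to[\up\kstar] \play_2[n]$ at Y-positions explicit and uses it to justify that $\bar\sigma_Y$ is a legal Group~II move, which is exactly the point the paper's diagram in \autoref{fig:group-II} is meant to convey.
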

\begin{proof}
    First, suppose that $\conf_0 \in \confset_X$ and let $\sigma_Y$ be an arbitrary strategy of player Y.
    We define another (non-positional) strategy $\bar\sigma_Y$, that depends on the last two configurations, by $\bar\sigma_Y(\conf, \conf') = \sigma_Y(\bar\sigma_X(\conf))$.
    We observe that for all $\conf \in \confset_X$, it holds that $\bar\sigma_Y(\conf, \sigma_X(\conf)) = \sigma_Y(\bar\sigma_X(\conf))$.
    It follows that the play $\play_1$ induced by $\sigma_X$ and $\bar\sigma_Y$ and the play $\play_2$ induced by $\bar\sigma_X$ and $\sigma_Y$ agree on every second configuration, i.e. the configurations in $\confset_X$.
    In particular, the sequence of visited global TSO configurations is the same in both plays.
    Since $\sigma_X$ is winning, it means that $\play_1$ is winning for player X and thus also $\play_2$ is winning.
    Because we chose $\sigma_Y$ arbitrarily, it follows that $\bar\sigma_X$ is a winning strategy.

    Otherwise, if $\conf_0 \in \confset_Y$, we consider the successors of $\conf_0$ instead.
    We note that $\bar\sigma_X$ must also be a winning strategy for each $\conf \in \post(\conf_0)$.
    But then, we can apply the previous arguments to each of those configurations and conclude that $\bar\sigma_X$ is a winning strategy for all of them.
    Thus, it is also a winning strategy for $\conf_0$.
\end{proof}

We conclude that if player X has a winning strategy $\sigma_X$, then she also has a winning strategy $\bar\sigma_X$ where she empties the buffers before every turn in which she does not perform a read operation.
By symmetry, the same holds true for player Y.
Thus, we can limit our analysis to this type of strategies.
We see that the number of messages in the buffers is bounded:
Suppose that the game is in configuration $\conf \in \confset_X$.
Then, $\bar\sigma_X$ either empties the buffer and adds at most one new message, or it performs a transition due to a read instruction, which does not increase the size of the buffers.
The analogous argumentation holds for player Y.
Hence, we can reduce the game to a game on bounded buffers, which is finite state and thus decidable.

Given the configuration $\conf_0$ as above, we construct a finite game $\game' = \tuple{ \confset', \confset_X', \confset_Y', \to', \confset_F'}$ as follows.
The set $\confset_X'$ contains all configurations from $\confset_X$ which have at most as many buffer messages than $\conf_0$ (or at most one message, if $\conf_0$ has empty buffers):
$\confset_X' = \Set{ \conf \in \confset_X \mid \sizeof{\buffermap\of\conf} \leq \max\set{1, \sizeof{\buffermap\of{\conf_0}}} }$, where $\sizeof\buffermap = \sum_{\pid\in\indexset} \sizeof{\buffermap\of\pid}$.
The set $\confset_Y'$ is defined accordingly.
Note that both sets are finite.
Lastly, $\to'$ is defined as the restriction of $\to$ to configurations of $\game'$, and $\confset_F' = \confset_F \cap \confset_A'$.
We define $\bar\sigma_X'$ to be the restriction of $\bar\sigma_X$ to $\confset_X'$.
Since $\bar\sigma_X'(\conf) \in \confset_Y'$ for all $\conf \in \confset_X'$, $\bar\sigma_X'$ is indeed a valid strategy for $\game'$.
In particular, it is the restriction of $\bar\sigma_X$ to $\game'$.

\begin{claim}
\label{claim:bb2}
    $\bar\sigma_X'$ is a winning strategy for $\conf_0$ in $\game'$.
\end{claim}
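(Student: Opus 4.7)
The plan is to follow the same pattern used in the proof of \autoref{claim:ab2}. I would pick an arbitrary strategy $\sigma_Y'$ of player Y in $\game'$ and extend it arbitrarily to a strategy $\sigma_Y$ on all of $\game$. By construction, $\bar\sigma_X'$ is the restriction of $\bar\sigma_X$ to $\confset_X'$, so these two strategies agree on $\confset_X'$, and $\sigma_Y$ agrees with $\sigma_Y'$ on $\confset_Y'$. The goal is then to show that the plays $\play_1 = \play(\conf_0, \bar\sigma_X', \sigma_Y')$ and $\play_2 = \play(\conf_0, \bar\sigma_X, \sigma_Y)$ are the very same sequence of configurations, so that the winning property transfers from $\game$ to $\game'$.

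The central step is a straightforward induction on the position $n$ in the plays, establishing simultaneously that the length-$n$ prefixes of $\play_1$ and $\play_2$ coincide and that every visited configuration lies in $\confset'$. The base case is immediate since $\conf_0 \in \confset'$. For the inductive step, if the current configuration $\conf_n$ lies in $\confset_X'$, then $\bar\sigma_X'(\conf_n) = \bar\sigma_X(\conf_n)$, and this successor lies in $\confset_Y'$ because, as noted just before the claim, $\bar\sigma_X$ either performs a read (which does not enlarge the buffers) or first empties them and then appends at most one new message. If instead $\conf_n \in \confset_Y'$, then $\sigma_Y'(\conf_n) = \sigma_Y(\conf_n)$, and this successor lies in $\confset_X'$ because $\sigma_Y'$ is by definition a strategy in $\game'$. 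In either case, both plays make the same transition and remain inside $\confset'$.

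\autoref{claim:bb1} tells us that $\play_2$ is winning for player X in $\game$, and since $\confset_F' = \confset_F \cap \confset_A'$ and $\play_1 = \play_2$ never leaves $\confset'$, the play $\play_1$ is also winning for player X in $\game'$. As $\sigma_Y'$ was arbitrary, $\bar\sigma_X'$ is a winning strategy from $\conf_0$ in $\game'$. The step requiring the most care is precisely the invariant that the play never leaves $\confset'$: this is what makes the arbitrary extension $\sigma_Y$ irrelevant on configurations outside $\confset_Y'$, and it rests on the two observations already used in the inductive step, namely that $\bar\sigma_X'$ maps $\confset_X'$ into $\confset_Y'$ and that any strategy of $\game'$ maps $\confset_Y'$ into $\confset_X'$ by definition of $\game'$.
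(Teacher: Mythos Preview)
Your proposal is correct and follows essentially the same approach as the paper: pick an arbitrary $\sigma_Y'$ in $\game'$, extend it to $\sigma_Y$ in $\game$, and argue that the two induced plays coincide so that the winning property transfers from $\bar\sigma_X$ to $\bar\sigma_X'$. You spell out the coincidence of plays via an explicit induction (showing the play never leaves $\confset'$), whereas the paper simply asserts it; your version is in fact slightly more complete, since the paper only treats the case $\conf_0 \in \confset_X$ while your induction handles both starting players uniformly.
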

\begin{proof}
    First, consider the case where $\conf_0 \in \confset_X$.
    Let $\sigma_Y'$ be a strategy for player Y in $\game'$ and let $\sigma_Y$ be an arbitrary extension of $\sigma_Y'$ to $\game$.
    The play $\play$ induced by $\bar\sigma_X$ and $\sigma_Y$ in $\game$ is the same as the play $\play'$ induced by $\bar\sigma_X'$ and $\sigma_Y'$ in $\game'$.
    Since $\bar\sigma_X$ is a winning strategy, $\play$ is a winning play.
    It follows that $\play'$ must also be a winning strategy.
    Since $\sigma_Y'$ was arbitrary, it follows that $\bar\sigma_X'$ is a winning strategy and $\conf_0$ is winning in $\game'$.
\end{proof}

\begin{theorem}
    The safety problem for games of group II is \exptime-complete.
\end{theorem}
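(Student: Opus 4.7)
The plan is to derive the equivalence between the safety problem on $\game$ and that on $\game'$, and then bound the size of $\game'$ to obtain \exptime-membership. By \autoref{claim:bb1} and \autoref{claim:bb2}, if player X has a winning strategy in $\game$ from $\conf_0$, then $\conf_0$ is also winning for player X in $\game'$. I would first emphasise that these claims were formulated generically, letting X stand for whichever player has a winning strategy and Y for the opponent, so the implication applies symmetrically to both A and B.

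To obtain the converse---that a configuration winning in $\game'$ is also winning in $\game$---I would invoke determinacy of safety games (\autoref{lem:positional}). The game $\game'$ is finite and hence determined, and $\game$ is determined as well. Suppose $\conf_0$ is winning for player X in $\game'$. If, for contradiction, $\conf_0$ were winning for the opponent Y in $\game$, then by the symmetric application of \autoref{claim:bb1} and \autoref{claim:bb2} with the roles of X and Y swapped, $\conf_0$ would also be winning for Y in $\game'$, contradicting determinacy. Hence $\conf_0$ is winning for X in $\game$ as well, and the safety problem on $\game$ reduces to that on $\game'$.

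What is left is a size estimate. The set $\confset'$ consists of TSO configurations whose total buffer size is bounded by $k := \max\set{1, \sizeof{\buffermap\of{\conf_0}}}$, paired with a player annotation. Since $k$ is bounded by the input size, the number of such configurations and the number of transitions between them are both exponential in the input. Applying \autoref{lem:finite} to $\game'$ therefore yields an exponential-time algorithm for deciding the safety problem on $\game$. Combined with the matching lower bound from \autoref{thm:complexity}, this gives \exptime-completeness.

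The step I expect to require the most care is confirming that the construction in \autoref{claim:bb1} and \autoref{claim:bb2} really is symmetric in the two players, i.e.\ that the buffer-emptying modification and the bounded-buffer restriction behave the same regardless of whether player A or player B is the one assumed to win. Once that symmetry is justified, determinacy closes the argument cleanly without any further technical work.
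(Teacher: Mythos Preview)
Your proposal is correct and matches the paper's own proof essentially line for line: the paper likewise applies \autoref{claim:bb1} and \autoref{claim:bb2} symmetrically (since X was defined as whichever player wins), observes that this makes the safety problems on $\game$ and $\game'$ equivalent, and then concludes \exptime-completeness from the exponential size of $\game'$ together with \autoref{lem:finite} and \autoref{thm:complexity}. Your explicit invocation of determinacy to close the converse direction is exactly the reasoning the paper leaves implicit in the sentence ``Thus, the safety problem for $\game$ is equivalent to the safety problem for $\game'$,'' and your worry about symmetry is already discharged by the paper's choice to let X denote the winning player from the outset.
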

\begin{proof}
    By \autoref{claim:bb1} and \autoref{claim:bb2}, if a configuration $\conf_0$ is winning for player A in game $\game$, then it is also winning in $\game'$.
    The same holds true for player B.
    Thus, the safety problem for $\game$ is equivalent to the safety problem for $\game'$.
    Similar to the games of group I, $\game'$ is finite and has exponentially many configurations.
    By \autoref{lem:finite} and \autoref{thm:complexity}, we can again conclude that the safety problem is \exptime-complete.
\end{proof}

\section{Group III}
\label{sec:group-III}

This group consists of all games where exactly one player has control over the buffer updates, and additionally the game where both players are allowed to update buffer messages \emph{after} their own move.
Intuitively, all of them have in common that the TSO program can attribute a buffer update to one specific player.
If only one player can update messages, this is clear.
In the other game, the first player who observes that a buffer message has reached the memory is not the one who has performed the buffer update.
Thus, the program is able to punish misbehaviour, i.e. not following protocols or losing messages.

We will show that the safety problem is undecidable for this group of games.
To accomplish that, we reduce the state reachability problem of PCS to the safety problem of each game.
Since the former problem is undecidable, so is the latter.

The case where player A is allowed to perform buffer updates at any time is called the \emph{A-TSO game}.
It is explained in detail in the following.
The other cases work similar, but require slightly different program constructions.
They are presented in the appendix \cite{spengler2023tso}.

\medskip

Consider the A-TSO game, i.e. the case where player A can update messages at any time, but player B can never do so.
Given a PCS $\channelsystem = \tuple{ \channelstateset, \channelmessageset, \to_\channelsystem }$ and a final state $\channelstate_F \in \channelstateset$, we construct a TSO program $\program$ that simulates $\channelsystem$.
We design the program such that $\channelstate_F$ is reachable in $\channelsystem$ if and only if player B wins the safety game induced by $\program$.
Thus, the construction gives her the initiative to decide which transitions of $\channelsystem$ will be simulated.
Meanwhile, the task of player A is to take care of the buffer updates.

$\program$ consists of three processes $\process^1$, $\process^2$ and $\process^3$, that operate on the variables $\set{ \xwr, \xrd, \yvar }$ over the domain $\channelmessageset \uplus \set{ 0, 1, \bot }$.
The first process simulates the control flow and the message channel of the PCS $\channelsystem$.
The second process provides a mean to read from the channel.
The only task of the third process is to prevent deadlocks, or rather to make any deadlocked player lose.
$\process^3$ achieves this with four states: the initial state, an intermediate state, and one winning state for each player, respectively.
If one of the players cannot move in both $\process^1$ and $\process^2$, they have to take a transition in $\process^3$.
From the initial state of this process, there exists only one outgoing transition, which is to the intermediate state.
From there, the other player can move to her respective winning state and the process will only self-loop from then on.
For player A, her state is winning because she can refuse to update any messages, which will ensure that player B keeps being deadlocked in $\process^1$ and $\process^2$.
For player B, her state simply is contained in $\stateset_F^\program$.
In the following, we will mostly omit $\process^3$ from the analysis and just assume that both players avoid reaching a configuration where they cannot take any transition in either $\process^1$ or $\process^2$.

As mentioned above, we will construct $\process^1$ and $\process^2$ to simulate the perfect channel system in a way that gives player B the control about which channel operation will be simulated.
To achieve this, each channel operation will need an even number of transitions to be simulated in $\program$.
Since player B starts the game, this means that after every fully completed simulation step, it is again her turn and she can initiate another simulation step as she pleases.
Furthermore, during the simulation of a skip or send operation, we want to prevent player A from executing $\process^2$, since this process is only needed for the receive operation.
Suppose that we want to block player A from taking a transition $\state \to[\instr]_\program \state'$.
We add a new transition $\state' \to[\nop]_\program \state_F$, where $\state_F \in \statemap_F^\program$.
Hence, reaching $\state'$ is immediately losing for player A, since player B can respond by moving to $\state_F$.

Next, we will describe how $\process^1$ and $\process^2$ simulate the perfect channel system $\channelsystem$.
For each transition in $\channelsystem$, we construct a sequence of transitions in $\process^1$ that simulates both the state change and the channel behaviour of the $\channelsystem$-transition.
To achieve this, $\process^1$ uses its buffer to store the messages of the PCS's channel.
In particular, to simulate a send operation $!\channelmessage$, $\process^1$ adds the message $\tuple{\xwr, \channelmessage}$ to its buffer.
For receive operations, $\process^1$ cannot read its own oldest buffer message, since it is overshadowed by the more recent messages.
Thus, the program uses $\process^2$ to read the message from memory and copies it to the variable $\xrd$, where it can be read by $\process^1$.
We call the combination of reading a message $\channelmessage$ from $\xwr$ and writing it to $\xrd$ the \emph{rotation} of $\channelmessage$.

While this is sufficient to simulate all behaviours of the PCS, it also allows for additional behaviour that is not captured by $\channelsystem$.
More precisely, we need to ensure that each channel message is received \emph{once and only once}.
Equivalently, we need to prevent the \emph{loss} and \emph{duplication} of messages.
This can happen due to multiple reasons.

The first phenomenon that allows the loss of messages is the seeming lossiness of the TSO buffer.
Although it is not strictly lossy, it can appear so:
Consider an execution of $\program$ that simulates two send operations $!\channelmessage_1$ and $!\channelmessage_2$, i.e. $\process^1$ adds $\tuple{\xwr, \channelmessage_1}$ and $\tuple{\xwr, \channelmessage_2}$ to its buffer.
Assume that player A decides to update both messages to the memory, without $\process^2$ performing a message rotation in between.
The first message $\channelmessage_1$ is overwritten by the second message $\channelmessage_2$ and is lost beyond recovery.

To prevent this, we extend the construction of $\process^1$ such that it inserts an auxiliary message $\tuple{\yvar, 1}$ into its buffer after the simulation of each send operation.
After a message rotation, that is, after $\process^2$ copied a message from $\xwr$ to $\xrd$, the process then resets the value of $\xwr$ to its initial value $\bot$.
Next, the process checks that $\yvar$ contains the value $0$, which indicates that only one message was updated to the memory.
Now, player A is allowed to update exactly one $\tuple{\yvar, 1}$ buffer message, after which $\process^2$ resets $\yvar$ to $0$.
To ensure that player A has actually updated only one message in this step, $\process^2$ then checks that $\xwr$ is still empty.
Since player A is exclusively responsible for buffer updates, $\process^2$ deadlocks her whenever one of these checks fails.

In the next scenario, we discover a different way of message loss.
Consider again an execution of $\program$ that simulates two send operations $!\channelmessage_1$ and $!\channelmessage_2$.
Assume Player A updates $\channelmessage_1$ to the memory and $\process^2$ performs a message rotation.
Immediately afterwards, the same happens to $\channelmessage_2$, without $\process^1$ simulating a receive operation in between.
Again, $\channelmessage_1$ is overwritten by $\channelmessage_2$ before being received, thus it is lost.

Player A is prevented from losing a message in this way by disallowing her to perform a complete message rotation (including the update of one $\tuple{\yvar,1}$-message and the reset of the variables) entirely on her own.
More precisely, we add a winning transition for player B to $\process^2$ that she can take if and only if player A is the one initiating the update of $\tuple{\yvar,1}$.
On the other hand, player A can prevent player B from performing two rotations right after each other by refusing to update the next buffer message until $\process^1$ initiates the simulation of a receive operation.

Lastly, we investigate message duplication.
This occurs if $\process^1$ simulates two receive operations without $\process^2$ performing a message rotation in between.
In this case, the most recently rotated message is received twice.

The program prevents this by blocking $\process^1$ from progressing after a receive operation until $\process^2$ has finished a full rotation.
In detail, at the very end of the message rotation and $\tuple{\yvar,1}$-update, $\process^2$ reset the value of $\xrd$ to its initial value $\bot$.
After simulating a receive operation, $\process^1$ is blocked until it can read this value from memory.

This concludes the mechanisms implemented to ensure that each channel message is received \emph{once and only once}.
Thus, we have constructed an A-TSO game that simulates a perfect channel system.
We summarise our results in the following theorem.
The formal proof can be found in Appendix B \cite{spengler2023tso}.

\begin{theorem}
\label{thm:atso}
    The safety problem for the A-TSO game is undecidable.
\end{theorem}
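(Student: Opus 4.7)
The plan is to reduce the state-reachability problem for perfect channel systems, which is undecidable by~\cite{DBLP:journals/jacm/BrandZ83}, to the safety problem of the A-TSO game, following the informal construction given just before the theorem statement. I would fix a PCS $\channelsystem = \tuple{\channelstateset, \channelmessageset, \transition}$, an initial configuration and a target state $\channelstate_F$, and spell out the three processes $\process^1$, $\process^2$, $\process^3$ of $\program$ precisely: $\process^1$ contains a copy of each PCS control state plus the intermediate states for simulating sends and receives; $\process^2$ contains the states needed to rotate a message from $\xwr$ to $\xrd$ and to reset $\yvar$; and $\process^3$ is the four-state deadlock arbiter. I would set $\stateset_F^\program$ to contain the encoding of $\channelstate_F$ in $\process^1$ together with the B-winning sink of $\process^3$, and take as initial TSO configuration the one with empty buffers, $\memorymap(\xwr) = \memorymap(\xrd) = \bot$ and $\memorymap(\yvar) = 0$. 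The claim to prove is: $\channelstate_F$ is reachable from $\conf_0$ in $\channelsystem$ if and only if the initial game configuration is winning for player B.

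For the forward direction I would describe player B's strategy explicitly: at each of her turns she advances $\process^1$, and $\process^2$ when needed, along the micro-steps simulating the next transition of a fixed accepting PCS run. Whenever player A refuses the intended buffer update, or performs too many, the check transitions in $\process^2$ (initiation of $\yvar$-updates by the wrong player, non-empty $\xwr$ after reset, and so on) let player B reach a state in $\stateset_F^\program$ on her following turn; whenever player A runs out of moves in $\process^1$ and $\process^2$ she is forced into $\process^3$, where player B takes the edge into her winning sink. In all cases a state of $\stateset_F^\program$ is eventually reached.

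The converse is the main technical step. I would take a positional winning strategy $\sigma_B$ (Lemma~\ref{lem:positional}) and define a player-A strategy that plays the honest protocol: after each simulated send she updates exactly one $\tuple{\xwr, \channelmessage}$-message and waits; after $\process^2$ initiates the $\yvar$-reset she updates exactly one $\tuple{\yvar, 1}$-message and waits again; she never updates otherwise. By induction on the resulting play I would maintain the invariant that every reached configuration encodes a PCS configuration $\tuple{\channelstate, \word}$ reachable from the initial one, where $\channelstate$ is the $\process^1$-state and $\word$ is the projection of $\buffermap(\process^1)$ to $\xwr$-messages, suffixed by the value currently stored at $\xrd$ when a rotation is in progress. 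Since $\sigma_B$ forces the play to some $\statemap(\process^1) \in \stateset_F^\program$, the invariant yields a PCS run ending in $\channelstate_F$.

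The main obstacle is verifying that the invariant survives every deviation player B might attempt. Three classes of deviation must be ruled out by case analysis: executing two send simulations without an intervening rotation is blocked by the $\yvar$-guard, since only one $\xwr$-message at a time has a matching $\tuple{\yvar, 1}$ and the follow-up check of $\memorymap(\xwr) = \bot$ would otherwise fail; completing a rotation without the $\process^2$-initiated $\yvar$-reset is blocked because, under the A-TSO rules, the $\yvar$-update can be unambiguously attributed to player A and gives player B a winning response transition in $\process^2$; and simulating two receives without a fresh rotation is blocked because $\process^1$'s post-receive wait on $\memorymap(\xrd) = \bot$ cannot be cleared until $\process^2$ has rotated a new message. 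Once these three guards are shown to preserve the encoding invariant, the reduction and hence the undecidability of the A-TSO safety problem follow.
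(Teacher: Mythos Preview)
Your overall plan is exactly the paper's: reduce PCS reachability to the A-TSO safety problem via the three-process program described in \autoref{sec:group-III}, and argue the two directions by giving each player an ``honest'' counter-strategy. The skeleton is sound, and the forward direction is essentially right.

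There is, however, a genuine muddle in your backward-direction case analysis. You say the obstacle is ruling out \emph{player B's} deviations against an honest player~A, but two of your three cases are not B-deviations at all. Case~1 (``two send simulations without an intervening rotation'') is perfectly legal PCS behaviour---the channel simply grows---and nothing in the construction blocks it; the $\yvar$-guard is there to stop \emph{player~A} from flushing two $\xwr$-messages to memory before $\process^2$ rotates, not to constrain sends. Your explanation that ``only one $\xwr$-message at a time has a matching $\tuple{\yvar,1}$'' is also wrong: the paper inserts $\tuple{\yvar,1}$ after \emph{every} send. Case~2 (rotation without the $\yvar$-reset being initiated by the right player) is likewise a check that punishes \emph{player~A} for over-rotating; with A honest it never fires and is irrelevant to the invariant. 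Only your Case~3 (two receives without a fresh rotation) is a genuine B-deviation, and it is handled by the $\xrd=\bot$ wait as you say.

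What the backward direction actually needs is: with A honest, (i) every rotation reads exactly the current head of the simulated channel (so B cannot skip a message---this is where A's refusal to update before a receive is initiated matters), and (ii) no rotated message is consumed twice (your Case~3). Conversely, the $\yvar$/$\xwr$ checks and the ``who initiated the $\yvar$-update'' trap belong in the \emph{forward} direction, where they let B punish an A who updates too much or too little. Once you reassign the three mechanisms to the correct direction and correct the description of the $\yvar$-guard, the argument goes through and coincides with the paper's.
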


\section{Group IV}
\label{sec:group-IV}

In TSO games where no player is allowed to perform any buffer updates, there is no communication between the processes at all.
A read operation of a process $\process^\pid$ on a variable $\xvar$ either reads the initial value from the shared memory, or the value of the last write of $\process^\pid$ on $\xvar$ from the buffer, if such a write operation has happened.

Thus, we are only interested in the transitions that are enabled for each process, but we do not need to care about the actual buffer content.
In particular, the information that we need to capture from the buffers and the memory is the values that each process can read from the variables, and whether a process can execute a memory fence instruction or not.
Together with the global state of the current configuration, this completely determines the enabled transitions in the system.

We call this concept the \emph{view} of the processes on the concurrent system and define it formally as a tuple $\view = \tuple{ \statemap, \valuemap, \fencemap }$, where:
\begin{itemize}
    \item $\statemap: \indexset \to \bigcup_{\pid \in \indexset} \stateset^\pid$ is a global state of $\program$.
    \item $\valuemap: \indexset \times \varset \to \valset$ defines which value each process reads from a variable.
    \item $\fencemap: \indexset \to \set{ \true, \false }$ represents the possibility to perform a memory fence instruction.
\end{itemize}
Given a view $\view = \tuple{ \statemap, \valuemap, \fencemap }$, we write $\statemap\of\view$, $\valuemap\of\view$ and $\fencemap\of\view$ for the global program state $\statemap$, the value state $\valuemap$ and the fence state $\fencemap$ of $\view$.

The view of a configuration $\conf$ is denoted by $\view\of\conf$ and defined in the following way.
First, $\statemap(\view\of\conf) = \statemap\of\conf$.
For all $\pid \in \indexset$ and $\xvar \in \varset$, if $\buffermap\of\conf\of\pid|_{\set\xvar\times\valset} = \tuple\xd \cdot \word$, then $\valuemap(\view\of\conf)(\pid, \xvar) = \dval$.
Otherwise, $\valuemap(\view\of\conf)(\pid, \xvar) = \memorymap\of\conf\of\xvar$.
Lastly, $\fencemap(\view\of\conf)\of\pid = \true$ if and only if $\buffermap\of\conf\of\pid = \varepsilon$.
We extend the notation to sets of configurations in the usual way, i.e. $\view(\confset') = \set{ \view(\conf) \mid \conf \in \confset'}$.

For $\conf, \conf' \in \confset_\program$, if $\view(\conf) = \view(\conf')$, then we write $\conf \equiv \conf'$ and say that $\conf$ and $\conf'$ are \emph{view-equivalent}.
In such a case, a local process of $\program$ cannot differentiate between $\conf$ and $\conf'$ in the sense that the enabled transitions in both configurations are the same.
\autoref{lem:views} captures this idea formally.

\begin{lemma}
\label{lem:views}
    For all $\conf_1, \conf_2, \conf_3 \in \confset_\program$, $\pid \in \indexset$ and $\instr \in \instrs$ with $\conf_1 \to[\instr_\pid] \conf_2$ and $\conf_1 \equiv \conf_3$,
    there exists a $\conf_4 \in \confset_\program$ such that $\conf_3 \to[\instr_\pid] \conf_4$ and $\conf_2 \equiv \conf_4$.
\end{lemma}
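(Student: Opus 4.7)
The plan is to proceed by case analysis on the instruction $\instr \in \instrs$, which ranges over read, write, skip, and memory fence (note that $\up$ is not in $\instrs$ and therefore does not need to be treated here). For each case, I would (a) construct an explicit candidate $\conf_4$ obtained by applying the corresponding TSO rule from $\conf_3$, (b) verify that the rule's premises are met at $\conf_3$ using the assumption $\view(\conf_1) = \view(\conf_3)$, and (c) check that $\view(\conf_2) = \view(\conf_4)$ by unpacking the three components $\statemap$, $\valuemap$, $\fencemap$ of the view.

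For the read case ($\instr = \rd\of\xd$), the key observation is that both the \textsc{read-own-write} rule and the \textsc{read-from-memory} rule witness the same semantic fact, namely $\valuemap(\view(\conf_1))(\pid, \xvar) = \dval$. Since views agree, $\valuemap(\view(\conf_3))(\pid, \xvar) = \dval$ as well, so one of the two read rules (possibly a different one than was used from $\conf_1$) fires at $\conf_3$ and produces a $\conf_4$ that differs from $\conf_3$ only by $\statemap[\pid \leftarrow \state']$; the views of $\conf_2$ and $\conf_4$ thus coincide componentwise. Skip and memory fence are even easier: for skip only $\statemap\of\pid$ changes, and for $\mf$ the prerequisite $\buffermap\of\conf_1\of\pid = \varepsilon$ translates to $\fencemap(\view(\conf_1))\of\pid = \true$, which carries over to $\conf_3$ by view-equivalence.

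The write case is the one that requires the most bookkeeping and is the main obstacle. Starting from $\conf_3$, applying \textsc{write} is always enabled (it has no precondition beyond the local transition being available), producing $\conf_4$ whose buffer on $\pid$ is $\tuple\xd \cdot \buffermap\of\conf_3\of\pid$. I would then check view-equivalence of $\conf_2$ and $\conf_4$ pointwise: the global states agree because both were updated via $\statemap[\pid \leftarrow \state']$; the fence flag of $\pid$ becomes $\false$ in both since the new buffer is non-empty, and the fence flags of other processes are untouched; for $\valuemap$, the pair $(\pid, \xvar)$ now reads $\dval$ in both $\conf_2$ and $\conf_4$ because $\tuple\xd$ is now at the head of $\buffermap\of\pid|_{\set\xvar \times \valset}$, while for any other pair $(\pid', \xvar')$ (with $\pid' \neq \pid$ or $\xvar' \neq \xvar$) the restricted buffer and the memory entry are unchanged, so the read value is inherited from $\conf_1$ respectively $\conf_3$, which agree by assumption.

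Once all four cases are dispatched, the lemma is established. I expect the write subcase to require the most care, specifically the argument that writes to $\xvar$ do not perturb the read value on any other variable because the buffer restriction $\buffermap\of\pid|_{\set{\xvar'}\times\valset}$ is insensitive to appending an $\xvar$-message; everything else is a direct reading of the relevant TSO rule in \autoref{fig:tso-semantics} against the definition of $\view\of\conf$.
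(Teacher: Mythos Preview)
Your proposal is correct and takes essentially the same approach as the paper: both proceed by case analysis on $\instr$, first use $\view(\conf_1)=\view(\conf_3)$ to argue that the rule's premises hold at $\conf_3$ (with the read case reducing to $\valuemap(\pid,\xvar)=\dval$ and the fence case to $\fencemap(\pid)=\true$), and then verify $\view(\conf_2)=\view(\conf_4)$ componentwise, singling out $\wr\of\xd$ as the only case that changes $\valuemap$ and $\fencemap$. Your write-case bookkeeping is spelled out in more detail than the paper's, but the argument is the same.
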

\begin{proof}
    We first show that $\instr_\pid$ is enabled at $\conf_3$.
    Since $\conf_1 \equiv \conf_3$, it holds that $\statemap(\conf_1) = \statemap(\conf_3)$.
    Furthermore, if $\instr_\pid = \rd\of\xd_\pid$, then $\valuemap(\view(\conf_1))(\pid, \xvar) = \valuemap(\view(\conf_3))(\pid, \xvar) = \dval$.
    Also, if $\instr_\pid = \mf_\pid$, then $\fencemap(\view(\conf_1))(\pid) = \fencemap(\view(\conf_3))(\pid) = \varepsilon$.
    From these considerations and the definition of the TSO semantics (see \autoref{fig:tso-semantics}), it follows that $\instr_\pid$ is indeed enabled at $\conf_3$.

    Let $\conf_4$ be the configuration obtained after performing $\instr_\pid$, i.e. $\conf_3 \to[\rd\of\xd_\pid] \conf_4$.
    It holds that $\statemap(\conf_4) = \statemap(\conf_2) = \statemap(\conf_1)[\pid \leftarrow \statemap(\conf_2)\of\pid]$.
    If $\instr_\pid = \wr\of\xd_\pid$, then $\valuemap(\view(\conf_4)) = \valuemap(\view(\conf_2)) = \valuemap(\view(\conf_1))[(\pid, \xvar) \leftarrow \dval]$
    and $\fencemap(\view(\conf_4)) = \fencemap(\view(\conf_2)) = \fencemap(\view(\conf_1))[\pid \leftarrow \false]$.
    Otherwise, $\valuemap(\view(\conf_4)) = \valuemap(\view(\conf_2)) = \valuemap(\view(\conf_1))$
    and $\fencemap(\view(\conf_4)) = \fencemap(\view(\conf_2)) = \fencemap(\view(\conf_1))$.
    In all cases it follows that $\conf_2 \equiv \conf_4$.
\end{proof}

We define a finite safety game played on TSO views and show that we can restrict our analysis to this game.
Let $\game = \tuple{ \confset, \confset_A, \confset_B, \to, \confset_F}$ be a TSO game where neither player can perform any updates.
We define a new game $\game' = \tuple{ \viewset, \viewset_A, \viewset_B, \to', \viewset_F}$ that is played on the views of $\game$.
We define $\viewset_A = \set{ \view\of\conf_A \mid \conf_A \in \confset_A }$, $\viewset_B = \set{ \view\of\conf_B \mid \conf_B \in \confset_B }$, $\viewset = \viewset_A \cup \viewset_B$ and $\viewset_F = \set{ \view\of\conf_A \mid \conf_A \in \confset_F }$.
Lastly, $\view(\conf) \to' \view(\conf')$ if and only if $\conf \to \conf'$.
This is well-defined by \autoref{lem:views}.

\begin{lemma}
\label{lem:no-updates}
    A configuration $\conf_0 \in \confset$ is winning (for player A / B) in $\game$ if and only if the view $\view_0 = \view(\conf_0) \in \viewset$ is winning (for player A / B) in $\game'$.
\end{lemma}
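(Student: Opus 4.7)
The plan is to exploit \autoref{lem:views}, which shows that view-equivalent configurations admit the same enabled transitions up to view-equivalence. This turns the view map $\view : \confset \to \viewset$ into a bisimulation-style correspondence between $\game$ and $\game'$, so that plays in one game can be lifted or projected to the other while preserving winning status. I shall prove only the direction ``$\view_0$ winning for a player $X$ in $\game'$ implies $\conf_0$ winning for $X$ in $\game$'', for both choices of $X$. The converse follows from determinacy: if $\view_0$ were not winning for $X$ in $\game'$, then by \autoref{lem:positional} it would be winning for the opponent $\bar X$ in $\game'$, and the lifting direction (applied to $\bar X$) would then give $\conf_0$ winning for $\bar X$ in $\game$, contradicting that $\conf_0$ is winning for $X$.

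For the lifting itself, let $\sigma'$ be a positional winning strategy for $X$ in $\game'$. I define a positional strategy $\sigma$ in $\game$ as follows: for each $\conf \in \confset_X$, pick a successor $\sigma\of\conf \in \post\of\conf$ with $\view\of{\sigma\of\conf} = \sigma'\of{\view\of\conf}$. To see that such a successor exists, note that the transition $\view\of\conf \to' \sigma'\of{\view\of\conf}$ in $\game'$ arises by definition of $\to'$ from some $\conf^{*} \to \conf^{**}$ in $\game$ with $\view\of{\conf^{*}} = \view\of\conf$ and $\view\of{\conf^{**}} = \sigma'\of{\view\of\conf}$; \autoref{lem:views} applied to the view-equivalence $\conf^{*} \equiv \conf$ then produces a $\conf' \in \post\of\conf$ with $\view\of{\conf'} = \view\of{\conf^{**}}$, which serves as $\sigma\of\conf$.

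To verify that $\sigma$ wins from $\conf_0$, I would take any opposing strategy $\tau$ and let $\conf_0, \conf_1, \dots$ be the induced play. The view sequence $\view\of{\conf_0}, \view\of{\conf_1}, \dots$ is a valid play in $\game'$ by the definition of $\to'$, and it agrees with $\sigma'$ at every $X$-turn by construction of $\sigma$. Because $\confset_F$ depends only on the global state (which is part of the view) and $\viewset_F = \view\of{\confset_F}$, reaching $\confset_F$ in the original play is equivalent to reaching $\viewset_F$ in the view play. Since $\sigma'$ is winning for $X$ in $\game'$, it follows that $\sigma$ is winning for $X$ in $\game$.

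I expect the only non-routine part of the argument to be the successor-picking step: it crucially relies on \autoref{lem:views} to turn an abstract view-level move of $\sigma'$ into a concrete configuration-level move of $\sigma$, and one must be careful that the same choice works uniformly for all view-equivalent configurations (this is automatic here because $\sigma$ depends on the actual configuration, not just its view). Everything else is bookkeeping: checking that view plays and configuration plays are in lockstep and that the final sets correspond under $\view$.
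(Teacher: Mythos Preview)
Your argument is correct and rests on the same bisimulation idea as the paper, but you lift strategies in the \emph{opposite} direction. The paper starts from a winning strategy $\sigma_X$ in $\game$ and constructs a strategy $\sigma_X'$ in $\game'$; because many configurations share a single view, the lifted strategy cannot simply evaluate $\sigma_X$ on ``the'' configuration with that view, so the paper makes $\sigma_X'$ non-positional, introduces an auxiliary choice function $f$ to reconstruct a concrete configuration from the history, and then has to handle the case $\conf_0 \in \confset_Y$ separately. You instead start from a positional strategy $\sigma'$ in $\game'$ and pull it back to $\game$: since each concrete configuration has a unique view, your lifted $\sigma$ is positional from the outset, and no case split on the owner of $\conf_0$ is needed. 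Both proofs then close with the same determinacy argument (prove one implication for both players and invoke \autoref{lem:positional} for the converse). Your route is the more economical of the two; the paper's route has the minor virtue of exhibiting the view-level strategy that corresponds to a given concrete one, but at the cost of heavier bookkeeping.
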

\begin{proof}
    To simplify notation, we extend $\view\of\conf$ to configurations of TSO games by $\view(\conf_A) = \view(\conf)_A$ and $\view(\conf_B) = \view(\conf)_B$ for $\conf_A \in \confset_A$ and $\conf_B \in \confset_B$.
    Hence, we can write $\viewset_A = \view(\confset_A)$ and similar.

    Suppose $\conf_0$ is winning for some player X with (positional) strategy $\sigma_X$ and consider the case $\conf_0 \in \confset_X$.
    In the following, we will define a (non-positional) strategy $\sigma_X'$ for $\game'$.

    First, we need an auxiliary function $f: \confset \times \viewset \to \confset$ that fulfills the condition:
    For all $\conf \in \confset$ and $\view \in \viewset$ such that $\view(\conf) \to' \view$, it holds that $\conf \to f(\conf, \view)$ and $\view = \view(f(\conf, \view))$.
    Intuitively, $f$ selects a successor of $\conf$ with view $\view$.
    Such a function exists by \autoref{lem:views}.

    For $n$ even and a sequence $\view_0, \dots, \view_n$, iteratively define $\conf_{2i-1} = \sigma(\conf_{2i-2})$ and $\conf_{2i} = f(\conf_{2i-1}, \view_{2i})$ for $i = 1, \dots, n/2$.
    Then, $\sigma_X'(\view_0, \dots, \view_n) = \view(\sigma_X(\conf_n))$.
    We will show that $\sigma_X'$ is a winning strategy for $\view_0$.
    Consider a positional strategy $\sigma_Y'$ for player Y in $\game'$.
    We define a positional strategy $\sigma_Y$ for player Y in $\game$ by $\sigma_Y(\conf) = f(\conf, \sigma_Y'(\view\of\conf))$.
    Consider the play $\play = \conf_0, \conf_1, \dots$ induced by $\sigma_X$ and $\sigma_Y$, and the play $\play' = \view_0, \view_1, \dots$ induced by $\sigma_X'$ and $\sigma_Y'$.

    We proof by induction over $k$, that (i) $\view_k = \view(\conf_k)$ and (ii) $\conf_k$ of $\play$ coincides with $\conf_k$ as in the definition of $\sigma_X'$.
    In this context, we refer to the latter with $\bar\conf_k$.
    For $k=0$, $\view_0 = \view(\conf_0)$ and $\conf_0 = \bar\conf_0$ by definition.
    For $k$ odd, $\conf_k = \sigma_X(\conf_{k-1}) = \bar\conf_k$ by the induction hypothesis.
    Also,
    $$\view_k = \sigma_X'(\view_0, \dots, \view_{k-1}) = \view(\sigma_X(\bar\conf_{k-1})) = \view(\sigma_X(\conf_{k-1})) = \view(\conf_k) \ .$$
    For $k>0$ even,
    $$\conf_k = \sigma_Y(\conf_{k-1}) = f(\conf_{k-1}, \sigma_Y'(\view(\conf_{k-1}))) = f(\bar\conf_{k-1}, \sigma_Y'(\view_{k-1})) = f(\bar\conf_{k-1}, \view_k) = \bar\conf_k \ .$$
    Lastly,
    $$\view(\conf_k) = \view(\sigma_Y(\conf_{k-1})) = \view(f(\conf_{k-1}, \sigma_Y'(\view(\conf_{k-1})))) = \view(f(\conf_{k-1}, \sigma_Y'(\view_{k-1}))) = \view(f(\conf_{k-1}, \view_k)) = \view_k \ ,$$
    where the last equality follows from the definition of $f$.

    Since $\sigma_X$ is a winning strategy for $\conf_0$, $\play$ is a winning play for player X.
    From the definition of $\viewset_F$ it follows that $\play'$ is a winning play in $\game'$ and thus $\view_0$ is winning for player X.
    Note that by \autoref{lem:positional}, we could have chosen a positional strategy in place of $\sigma_X'$.
    Since we did not put any restrictions on the identity of player X, this concludes both the \emph{if} and the \emph{only if} direction of the proof for the case $\conf_0 \in \confset_X$.

    Otherwise, if $\conf_0 \in \confset_Y$, we consider all configurations of $\post(\conf_0)$ instead.
    We have the following chain of equivalences:
    $\conf_0$ is winning $\iff$ all $\conf \in \post(\conf_0)$ are winning $\iff$ all $\view \in \view(\post(\conf_0))$ are winning $\iff$ all $\view \in \post(\view(\conf_0))$ are winning $\iff $ $\view(\conf_0)$ is winning.
    Here, the second equivalence applies the first case of this proof and the third equivalence uses $\post(\view(\conf_0)) = \view(\post(\conf_0))$, which follows from the definition of $\game'$.
\end{proof}

\begin{theorem}
    The safety problem for games in group IV is \exptime-complete.
\end{theorem}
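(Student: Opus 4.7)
The plan is to combine the structural reduction provided by \autoref{lem:no-updates} with a cardinality estimate on the view game, and then invoke the generic tools already established in the paper.

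For \exptime-hardness, I would simply cite \autoref{thm:complexity}, which shows hardness already for a single process under any updating regime, in particular when neither player may update.

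For the matching \exptime upper bound, I would proceed as follows. Given an instance of the safety problem for a group IV game $\game$ with initial configuration $\conf_0$, first construct the finite view game $\game' = \tuple{\viewset, \viewset_A, \viewset_B, \to', \viewset_F}$ as defined immediately before \autoref{lem:no-updates}, and compute $\view_0 = \view(\conf_0)$. By \autoref{lem:no-updates}, $\conf_0$ is winning for a player in $\game$ if and only if $\view_0$ is winning for the same player in $\game'$, so it suffices to decide the latter. Next I would bound the size of $\game'$: a view $\tuple{\statemap, \valuemap, \fencemap}$ consists of a global state in $\prod_{\pid \in \indexset} \stateset^\pid$, a function $\valuemap: \indexset \times \varset \to \valset$, and a function $\fencemap: \indexset \to \set{\true, \false}$, together with an $A/B$ annotation. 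Thus
\begin{equation*}
\Sizeof{\viewset} \leq 2 \cdot \prod_{\pid \in \indexset} \Sizeof{\stateset^\pid} \cdot \Sizeof{\valset}^{\sizeof{\indexset}\cdot\sizeof{\varset}} \cdot 2^{\sizeof{\indexset}} ,
\end{equation*}
which is exponential in the size of the input program $\program$. The transition relation $\to'$ is definable directly from the TSO semantics and can be enumerated in the same exponential bound.

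Finally, I would apply \autoref{lem:finite} to $\game'$ to compute the set of winning configurations in time linear in $\sizeof{\viewset} + \sizeof{\to'}$, hence in exponential time overall, and check whether $\view_0$ lies in the winning region of the relevant player. Combined with the hardness from \autoref{thm:complexity}, this yields \exptime-completeness. The only part requiring any real argument is the reduction to the finite view game, and that work has already been done in \autoref{lem:views} and \autoref{lem:no-updates}; the remainder is a routine counting and algorithmic invocation, so I do not anticipate a significant obstacle.
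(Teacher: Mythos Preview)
Your proposal is correct and follows essentially the same approach as the paper's own proof: reduce to the finite view game via \autoref{lem:no-updates}, observe that the number of views is at most exponential, and conclude by invoking \autoref{lem:finite} for membership and \autoref{thm:complexity} for hardness. The only difference is that you spell out the cardinality bound on $\viewset$ explicitly, whereas the paper simply asserts that there are exponentially many views.
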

\begin{proof}
    By \autoref{lem:no-updates}, the safety problem for $\game$ is equivalent to the safety problem of $\game'$, which is played on views.
    Since there exist only exponentially many views, \exptime-completeness follows from \autoref{lem:finite} and \autoref{thm:complexity}, similar to Group I and II.
\end{proof}

\section{Conclusion and Future Work}
In this work we have addressed for the first time the game problem for programs running under weak memory models in general and TSO in particular.
Surprisingly, our results show that depending on when the updates take place, the problem can turn out to be undecidable or decidable.
In fact, there is a subtle difference between the decidable (group I, II and IV) and undecidable (group III) TSO games.
For the former games, when a player is taking a turn, the system does not know who was responsible for the last update.
But for the latter games, the last update can be attributed to a specific player.
Another surprising finding is the complexity of the game problem for the groups I, II and IV which is \exptime-complete in contrast with the non-primitive recursive complexity of the reachability problem for programs running under TSO and the undecidability of the repeated reachability problem.

In future work, the games where exactly one player has control over the buffer seem to be the most natural ones to expand on.
In particular, the A-TSO game (where player A can update before and after her move) and the B-TSO game (same, but for player B).
On the other hand, the games of groups I, II and IV seem to be degenerate cases and therefore rather uninteresting.
In particular, they do not seem to be more powerful than games on programs that follow SC semantics.

Another direction for future work is considering other memory models, such as the partial store ordering semantics, the release-acquire semantics, and the ARM semantics.
It is also interesting to define stochastic games for programs running under TSO as extension of the probabilistic TSO semantics \cite{DBLP:conf/esop/AbdullaAAGK22}.

% REFERENCES
\newpage
\bibliographystyle{eptcs}
\bibliography{bibdatabase}

\end{document}